\documentclass{article}
\usepackage[english]{babel}   
\usepackage{natbib}
\usepackage[T1]{fontenc}
\usepackage[utf8]{inputenc}   
\usepackage{graphicx}
\usepackage{amsmath,amssymb,amsthm,amsfonts}
\usepackage{tikz}
\usetikzlibrary{shapes,arrows,positioning,matrix,trees,fit,backgrounds,decorations.pathreplacing,arrows.meta}

\usepackage[margin=1in]{geometry}
\usepackage{mathtools} 
\usepackage[table]{xcolor}
\usepackage{caption}
\usepackage{subcaption}
\usepackage{float}
\usepackage{setspace}
\usepackage{amsmath} 
\usepackage{hyperref}

\usepackage{amsmath}
\usepackage{amsthm}
\usepackage{amssymb}
\usepackage{xspace}
\usepackage{cleveref}
\usepackage{enumitem}
\usepackage{thmtools}

\newtheorem{theorem}{Theorem}
\newtheorem{proposition}[theorem]{Proposition}
\newtheorem{lemma}[theorem]{Lemma}

\newtheorem{corollary}[theorem]{Corollary}
\newtheorem{example}[theorem]{Example}
\newtheorem*{claim}{Claim}

\usepackage{nicefrac}

\newcommand{\sysrounding}{\textsc{SystematicRounding}\xspace}
\newcommand{\sysroundinginput}{\ensuremath{\textsc{SystematicRounding}(x,\triangleright)}\xspace}

\newcommand{\Bestrep}{\textsc{BestRepresentative}\xspace}

\usetikzlibrary{calc}
\usepackage{colortbl}

\definecolor{c1col}{HTML}{0072B2} 
\definecolor{c2col}{HTML}{E69F00} 
\definecolor{c3col}{HTML}{009E73} 
\definecolor{c4col}{HTML}{CC79A7} 
\definecolor{c5col}{HTML}{56B4E9} 
\definecolor{c6col}{HTML}{D55E00}

\usepackage{fancyhdr}
\fancypagestyle{plain}{%
  \fancyhf{}%
  \fancyfoot[C]{\thepage}
}

\title{Best-of-Both-Worlds Proportionality \\ in Structured Committee Elections}
\author{
\begin{tabular}{ccc}
Dimitris Avramidis & Mark de Berg & Ulrike Schmidt-Kraepelin \\
\texttt{d.n.avramidis@tue.nl} &
\texttt{m.t.d.berg@tue.nl} &
\texttt{u.schmidt.kraepelin@tue.nl}
\end{tabular}
\\ Department of Mathematics and Computer Science\\
TU Eindhoven, The Netherlands}

\date{}
\begin{document}

\maketitle 

\begin{abstract}
We study randomized committee elections with additive preferences, where we seek best-of-both-worlds proportionality. That is, we strive for strong guarantees for the
selection probabilities ex ante and each realized committee ex
post. We focus on structured preferences and introduce antichain
single-peakedness, a restricted domain generalizing single-peaked
additive preferences as well as several well-studied approval domains. Our
approach first computes a fractional committee and then decomposes it
through a rounding rule called systematic rounding. Among profile-oblivious rounding rules, we characterize systematic rounding as the unique rule satisfying natural quota guarantees, and thereby the only promising rule for our two-step approach. 

We show that ex-ante core stability is incompatible with
ex-post justified representation (JR), even under single-peaked preferences. However, applying
systematic rounding to certain fractional allocations that form so-called \emph{Lindahl equilibria} yields ex-ante core
stability and ex-post core stability up to one. Alternatively, we retain exact ex-post proportionality and approximate
the ex-ante core using two parameters: $\alpha$ raises the utility gain
required for a deviation, while $\beta$ reduces the coalition's
proportional share. We establish the tight threshold
$\alpha\beta=2$. Below it, ex-post JR is incompatible even under single-peaked preferences; at and above it, adapting a fractional rule called \Bestrep \citep{pierczynski_core-stable_2021} and
applying systematic rounding to its output yields ex-ante
$(\alpha,\beta)$-fractional core stability and ex-post PJR under
antichain single-peaked preferences.
\end{abstract}

\section{Introduction}\label{sec:intro}

Selecting a representative committee is a central problem in
computational social choice and provides a natural framework for studying
proportional representation: the size of a voter group determines the
share of the committee to which it is entitled. Moreover, committee
selection generalizes classical apportionment \citep{brill2018multiwinner}
and constitutes the unit-cost special case of participatory budgeting
\citep{3540261.3541235}.

However, the inherent indivisibility of committee seats often makes perfect proportionality impossible, as a group may be entitled to a fractional number of representatives. An elegant way to address this tension is the best-of-both-worlds approach, introduced in fair division by \citet{aziz2024best} and subsequently applied to committee voting \citep{aziz2023best,FlowIsFair} and participatory budgeting \citep{aziz2024fair}. Rather than selecting a single committee, a randomized rule returns a lottery over committees. Proportionality can then be required both \emph{ex ante}, for the fractional committee induced by the selection probabilities, and \emph{ex post}, for every committee in the support of the lottery.

Within this model, one particularly demanding objective has seen little progress: combining fractional core stability ex ante with proportionality ex post. A fractional committee is core stable if no coalition can use its proportional share of the committee size to construct another fractional committee that every coalition member strictly prefers. Core stability is regarded as a gold standard because it protects every coalition without requiring agreement among its members. Under additive preferences, the fractional core is nonempty, for example via Lindahl equilibria \citep{foley1970lindahl,fain2016core}, whereas the integral core may be empty in general \citep[Example~2]{3540261.3541235}.

Despite the existence of fractionally core-stable solutions for such a broad range of valuations, \citet{jin2026limitations} have recently established severe limitations on the compatibility of ex-ante core stability with ex-post proportionality: even for approval utilities, ex-ante Pareto optimality (a much weaker property than core stability) is incompatible with ex-post JR. In light of this incompatibility, we focus on scenarios in which preferences are roughly structured along a linear order of the alternatives. Beyond avoiding the incompatibility shown by \citet{jin2026limitations}, this structure also allows us to take an approach that differs significantly from those used in the best-of-both-worlds literature so far. Specifically, instead of deriving randomized versions of deterministic rules with ex-post guarantees (see, e.g., \citep{FlowIsFair,kehne2025robust}), we start from a fractionally core-stable committee and decompose it into integral committees with strong proportionality guarantees. Although every fractional committee admits a marginal-preserving decomposition, arbitrary decompositions need not preserve proportionality, so candidate selections must be carefully correlated. For this decomposition, we use a rounding rule called systematic rounding, which takes as input the fractional committee together with a linear order over the candidates with respect to which the preference profile satisfies certain structural conditions. A well-known restricted domain to which our approach applies is that of single-peaked preferences. However, we show that our approach extends significantly beyond this domain.

\paragraph {Our contributions}
\emph{Antichain single-peakedness.}
We introduce a generalization of single-peakedness for additive preferences that also generalizes several well-studied approval domains. Roughly speaking, a profile belongs to this domain if every restriction to a set of candidates forming an antichain in their dominance relation is single-peaked. The definition is inspired by recent work exploiting an analogous structure for approval preferences
\citep{dong2026reconfiguring}. We show that all our positive results extend from single-peaked to antichain
single-peaked profiles.

\emph{Systematic rounding.} Our positive results use order-aware but otherwise profile-oblivious rounding: the rounding rule receives only a fractional committee and the candidate order witnessing antichain single-peakedness. Within this framework, we characterize systematic rounding from the statistics literature \citep{10.1214/aoms/1177729988} as the unique rule satisfying natural lower- and upper-quota guarantees, which are necessary to achieve many ex-post proportionality notions via this approach. 

\emph{Exact ex-ante core stability.} We show that, even under single-peaked preferences, no voting rule can combine ex-ante fractional core stability with either ex-post JR or PJR, the weakest proportionality axioms we consider. 
Nevertheless, systematically rounding suitable fractionally
core-stable solutions, including Lindahl equilibria, yields ex-ante core stability and ex-post core stability up to one candidate.
For approval preferences, we strengthen the ex-post guarantees to exact core stability and EJR+, making explicit and strengthening a result of \citet{pierczynski_core-stable_2021}.

\emph{Approximate ex-ante core stability.}
Alternatively, we retain exact ex-post proportionality and approximate
the ex-ante core using two parameters: $\alpha$ raises the utility gain
required for a deviation, while $\beta$ reduces the coalition's
proportional share. We establish the tight threshold
$\alpha\beta=2$. Below it, ex-post JR or PJR are impossible even under
single-peaked preferences. At and above it, adapting \Bestrep
\citep{pierczynski_core-stable_2021} to additive preferences and applying \sysrounding to its output yields ex-ante
$(\alpha,\beta)$-fractional core stability and ex-post JR and PJR under
antichain single-peaked preferences.

\paragraph{Related Work} 

\emph{Best-of-both-worlds committee elections.}
\citet{aziz2023best} initiated the best-of-both-worlds study of committee voting, combining ex-ante \emph{group fair-share} axioms with ex-post EJR and FJR. \citet{FlowIsFair} strengthened the ex-ante guarantee to group resource proportionality (GRP) while retaining
these ex-post guarantees. \citet{aziz2024fair} extended this approach to participatory budgeting, achieving ex-ante Strong UFS together
with ex-post FJR. All these positive results assume binary utilities,
and \citet{aziz2024fair} establish a strong impossibility for additive
preferences. In contrast to our work, they impose no
structural restrictions on preferences. Assuming such structure,
we target fractional core stability, which has been out of reach so far. Indeed, concurrent work by \citet{jin2026limitations} shows that, in general approval-based committee elections, weak Pareto optimality (which is implied by our notion of core stability) is incompatible even with ex-post JR, a weak proportionality requirement.
In a complementary direction, \citet{kehne2025robust} combine
ex-ante continuity with ex-post proportionality.

\emph{Proportionality in structured preference domains.}
\citet{pierczynski_core-stable_2021} establish core stability in both
the fractional and discrete models on several structured domains, for
approval preferences and for ordinal preferences under a
lexicographic extension. Their fractional algorithm, \Bestrep, is
followed by a deterministic rounding rule called
\textsc{MedianRule}, which can be viewed as selecting one realization
of \sysrounding. Although their results are not phrased in
best-of-both-worlds terms, our \Cref{lem:expostcore}
strengthens and makes explicit guarantees that are implicit in their
work. For additive preferences, however, \Bestrep no longer satisfies exact fractional core stability. We show in \Cref{thm: alphabeta_2_tight} that it provides a $2$-approximate ex-ante core guarantee.

\section{Preliminaries}\label{sec:prelim}
For every \(q\in\mathbb{N}\), let \([q]\coloneqq\{1,\dots,q\}\). An \emph{election} is a tuple \(E =(N,C,U,k)\), where \(N\) is a set of \(n\) voters, \(C\) is a set of \(m\) candidates, \(U=(u_i)_{i\in N}\) is a utility profile with \(u_i\colon C\to\mathbb{R}_{\geq 0}\) for every \(i\in N\), and \(k\in[m]\) is the target committee size. A \emph{committee} is a set \(W\subseteq C\) with \(|W|=k\). A \emph{fractional committee} is a vector \(x\in[0,1]^C\) with \(x(C)=k\), where \(x(T)\coloneqq\sum_{c\in T}x_c\) for every \(T\subseteq C\). We call \(W\subseteq C\) with \(|W|<k\) a \emph{partial committee} and \(x\in[0,1]^C\) with \(x(C)<k\) a \emph{partial fractional committee}. Utilities are additive and extend according to \[u_i(T)\coloneqq\sum_{c\in T}u_i(c) \text{ and } u_i(x)\coloneqq\sum_{c\in C}x_cu_i(c).\]

A \emph{lottery} over committees is a probability distribution \(\mathcal{L}=\{(\lambda_j,W^j)\}_{j\in[s]}\), where \(s\in\mathbb{N}\), \(W^1,\dots,W^s\) are distinct committees, \(\lambda_j>0\) for every \(j\in[s]\), and \(\sum_{j\in[s]}\lambda_j=1\). $\mathcal{L}$ \emph{implements} the fractional committee \(x=\sum_{j\in[s]}\lambda_j\mathbf{1}_{W^j}\), where \(\mathbf{1}_{W^j}\) is the incidence vector of \(W^j\). Equivalently, \(x_c\) is the probability that \(c\) is selected, and \[u_i(x)=\mathbb{E}_{W\sim\mathcal{L}}[u_i(W)].\]

A lottery satisfies a property \emph{ex ante} if its implemented fractional committee satisfies it, and \emph{ex post} if every committee in its support satisfies it.

We study two proportionality notions: core stability and
justified-representation axioms \citep{Aziz2017}.

\smallskip

\noindent\textbf{Core Stability:} Given an election $E$ and a committee $W$, a nonempty coalition $S\subseteq N$ \emph{blocks} $W$ if there exists a (partial) committee $T\subseteq C$ such that $|T|\leq k\cdot\nicefrac{|S|}{n}$ and
$u_i(T)>u_i(W)$ for every voter $i\in S$. A committee is \emph{core stable} if it is not blocked by any coalition. 

\smallskip

\noindent\textbf{(Weak) Fractional Core Stability:}\footnote{The standard version allows all but one member of a blocking coalition to be indifferent between $x$ and $y$. In contrast, our notion corresponds to the \emph{weak core}. We omit ``weak'' for brevity.} A nonempty coalition $S\subseteq N$ \emph{fractionally blocks} a (partial) fractional committee $x$ if there exists a (partial) fractional committee $y\in[0,1]^C$ such that $y(C)\leq k\cdot\nicefrac{|S|}{n}$ and $u_i(y)>u_i(x)$ for every voter $i\in S$. A (partial) fractional committee is \emph{fractionally core stable} if it is not
fractionally blocked by any coalition.

\smallskip

For the family of justified representation axioms, we use the additive-utility versions introduced in the participatory budgeting literature, adapted to our setting. 
For a function $\tau:C\to\mathbb{R}_{\geq 0}$ and a set
$T\subseteq C$, let
$\tau(T)\coloneqq\sum_{c\in T}\tau(c)$.
A nonempty coalition $S\subseteq N$ is \emph{$(\tau,T)$-cohesive} if
$\nicefrac{|S|}{n}\geq\nicefrac{|T|}{k}$ and
$u_i(c)\geq\tau(c)$ for every voter $i\in S$ and every candidate
$c\in T$. A committee $W$ satisfies:

\smallskip
\noindent \textbf{Justified Representation (JR) \citep{3540261.3541235}} if, for all $(\tau,\{c\})$-cohesive coalitions $S$, there is $i\in S$ with $u_i(W)\!\geq\!\tau(c)$;

\smallskip
\noindent \textbf{Proportional Justified Representation (PJR) \citep{ijcai2022p57}} if, for all $(\tau,T)$-cohesive coalitions $S \subseteq N$, it holds that $\sum_{c \in W} \max_{i \in S}u_i(c) \geq \tau(T)$;

\smallskip
\noindent \textbf{Extended Justified Representation (EJR) \citep{3540261.3541235}} if, for all $(\tau,T)$-cohesive coalitions $S$, there exists $i\in S$ with $u_i(W)\geq\tau(T)$.

\smallskip
 
Note that core stability $\Rightarrow$ EJR $\Rightarrow$ PJR and EJR $\Rightarrow$ JR. Note that, while in the case of approval utilities PJR also implies JR, this is not the case in general, see Appendix~\ref{app:PJR_notimply_JR}.

\section{Decomposing Fractional Committees} \label{sec:DecomposingFractionalCommittee}

There are two natural approaches to best-of-both-worlds
proportionality in committee voting. The first introduces randomization
into a deterministic algorithm with strong ex-post guarantees to
strengthen its ex-ante guarantees, as used to achieve ex-ante group
fair share \citep{FlowIsFair,aziz2023best} and continuity
\citep{kehne2025robust}. The second decomposes a desirable fractional
committee, such as a core-stable one, into integral committees with ex-post guarantees. Since such decompositions are difficult without structure, we restrict attention to elections that are consistent, though not necessarily perfectly, with a linear order. We then use sampling
methods operating along this order. For illustration, we begin with approval
preferences that are perfectly consistent with such an order and
relax these assumptions later. 

An election is \emph{approval-based} if
$u_i(c)\in\{0,1\}$ for all $i\in N$ and $c\in C$. Voter $i$'s
\emph{approval set} is $A_i\coloneqq\{c\in C:u_i(c)=1\}$. Given a linear order $\triangleright$ over $C$, we say that $S \subseteq C$ forms an \emph{interval} with respect to $\triangleright$, if there exists no triple of candidates $a,c \in S$ and $b \in C \setminus S$ such that $a \triangleright b \triangleright c$. 

\paragraph{Candidate-interval profiles \citep{elkind2015structure}:}
An approval-based committee election $E$ has the \emph{candidate-interval} property if there exists a linear order $\triangleright$ over $C$ such that every approval set $A_i$ is an interval with respect to $\triangleright$.  
\smallskip

Given a candidate-interval election, our approach is simple. Starting from a core-stable fractional committee $x$ and the order $\triangleright$, we seek a randomized rounding scheme that induces a lottery $\mathcal{L}$ which implements $x$ and is ex-post core stable. We consider schemes that are order-aware but otherwise profile-oblivious: they receive $(x,\triangleright)$, but not the full
profile. Although the restriction to profile-oblivious rounding is not necessary, it is compatible with core-stable outcomes and essentially
determines the appropriate rounding scheme.

In particular, any such scheme producing ex-post core-stable committees must satisfy \emph{lower quota}: Assume for now that all entries of $x$ are rational. Then, we show that for every $W$ in the support of $\mathcal{L}$ and for every interval $C'\subseteq C$ with respect to $\triangleright$, we must have $|W\cap C'|\geq\lfloor x(C')\rfloor$. To see this, consider the candidate-interval profile in which an $\nicefrac{x(C')}{k}$-fraction of voters approve exactly $C'$, while, for every $c\in C\setminus C'$, a disjoint $\nicefrac{x_c}{k}$-fraction approves only $c$. Then, core stability of $W$ requires lower quota for $C'$. There exists a similar, but slightly less intuitive construction that holds also for irrational $x$.\footnote{Fix an interval $C'\subseteq C$ and assume $\lfloor x(C')\rfloor >0$, since otherwise lower quota holds
automatically. Now, construct an election with $n=k$ voters:
$\lfloor x(C')\rfloor$ voters approve exactly $C'$, and the remaining voters
approve all candidates. This profile is candidate-interval
with respect to $\triangleright$ and $x$ is fractionally core stable. Moreover, it is easy to verify that an integral committee $W$ satisfies core stability if and only if $|W\cap C'|\geq \lfloor x(C') \rfloor$. Therefore every outcome must satisfy lower quota
for $C'$.} Since the rounding scheme does not observe the profile, it must satisfy lower quota for all intervals simultaneously.  A classical rounding scheme from the statistics literature has precisely this property. 

\smallskip
\noindent\textbf{Systematic rounding \citep{10.1214/aoms/1177729988}:}
Given a fractional committee $x$ and an order $c_1\triangleright\cdots\triangleright c_m$, partition $[0,k)$ into consecutive half-open intervals $I_{c_1},\ldots,I_{c_m}$ with $|I_c|=x_c$. Draw $\theta\sim\mathrm{U}[0,1)$ and select every candidate whose interval contains one of the points $\theta,\theta+1,\ldots,\theta+k-1$. We refer to \Cref{fig:syssam} for an illustration.

\smallskip

\begin{figure}[ht]
\centering
\captionsetup[subfigure]{skip=1pt}
\def\R{1.3}           
\def\k{3}             
\def\thetashift{70}   

\colorlet{c1fill}{c1col!70!white}
\colorlet{c2fill}{c2col!70!white}
\colorlet{c3fill}{c3col!70!white}
\colorlet{c4fill}{c4col!70!white}
\colorlet{c5fill}{c5col!70!white}
\colorlet{c6fill}{c6col!70!white}

\begin{subfigure}{0.48\linewidth}
\centering
\begin{tikzpicture}[scale=1]

\draw[thick,black!] (0,0) circle (\R+0.33);
\draw[thick,black!] (0,0) circle (\R-0.33);
\draw[line width=18pt,c1fill] (0:\R)   arc[start angle=0,end angle=63,radius=\R];
\draw[line width=18pt,c2fill] (63:\R)  arc[start angle=63,end angle=99,radius=\R];
\draw[line width=18pt,c3fill] (99:\R)  arc[start angle=99,end angle=180,radius=\R];
\draw[line width=18pt,c4fill] (180:\R) arc[start angle=180,end angle=234,radius=\R];
\draw[line width=18pt,c5fill] (234:\R) arc[start angle=234,end angle=279,radius=\R];
\draw[line width=18pt,c6fill] (279:\R) arc[start angle=279,end angle=360,radius=\R];

\foreach \a in {0,63,99,180,234,279}
{
    \draw[black!,thick] (\a:{\R-0.32}) -- (\a:{\R+0.32});
}

\node[font=\fontsize{12}{12}\selectfont] at (31.5:{\R})  {$c_1$};
\node[font=\fontsize{12}{12}\selectfont] at (81:{\R})    {$c_2$};
\node[font=\fontsize{12}{12}\selectfont] at (139.5:{\R}) {$c_3$};
\node[font=\fontsize{12}{12}\selectfont] at (207:{\R})   {$c_4$};
\node[font=\fontsize{12}{12}\selectfont] at (256.5:{\R}) {$c_5$};
\node[font=\fontsize{12}{12}\selectfont] at (319.5:{\R}) {$c_6$};

\foreach \a in {0,120,240}
{
    \draw[dashed,black!40,line width=1.3pt]
        (0,0) -- (\a:{\R-0.3});
}

\end{tikzpicture}
\end{subfigure}
\hfill
\begin{subfigure}{0.48\linewidth}
\centering
\begin{tikzpicture}[scale=1]

\draw[thick,black] (0,0) circle (\R+0.33);
\draw[thick,black] (0,0) circle (\R-0.33);

\draw[->,>=Latex,thick]
    (0:{\R-0.6})
    arc[
        start angle=0,
        end angle=\thetashift,
        radius={\R-0.6}
    ];

\node[
    font=\fontsize{12}{12}\selectfont
]
at ({\thetashift/2}:{\R-0.82}) {$\theta$};

\foreach \a in {0,120,240}
{
    \draw[dashed,black!40,line width=1.3pt]
        (0,0) -- (\a:{\R-0.3});
}

\foreach \a in {\thetashift,120+\thetashift,240+\thetashift}
{
    \draw[dashed,black!80,line width=1.5pt]
        (0,0) -- (\a:{\R-0.3});
}

\draw[line width=18pt,c1fill] (0:\R)   arc[start angle=0,end angle=63,radius=\R];
\draw[line width=24pt,c2fill] (63:\R)  arc[start angle=63,end angle=99,radius=\R];
\draw[line width=18pt,c3fill] (99:\R)  arc[start angle=99,end angle=180,radius=\R];
\draw[line width=24pt,c4fill] (180:\R) arc[start angle=180,end angle=234,radius=\R];
\draw[line width=18pt,c5fill] (234:\R) arc[start angle=234,end angle=279,radius=\R];
\draw[line width=24pt,c6fill] (279:\R) arc[start angle=279,end angle=360,radius=\R];

\foreach \a in {0,63,99,180,234,279}
{
    \draw[black!,thick] (\a:{\R-0.32}) -- (\a:{\R+0.37});
}

\node[font=\fontsize{12}{12}\selectfont] at (31.5:{\R})  {$c_1$};
\node[font=\fontsize{12}{12}\selectfont] at (81:{\R})    {$c_2$};
\node[font=\fontsize{12}{12}\selectfont] at (139.5:{\R}) {$c_3$};
\node[font=\fontsize{12}{12}\selectfont] at (207:{\R})   {$c_4$};
\node[font=\fontsize{12}{12}\selectfont] at (256.5:{\R}) {$c_5$};
\node[font=\fontsize{12}{12}\selectfont] at (319.5:{\R}) {$c_6$};

\foreach \a in {63,99,180,234,279,360}
{
    \draw[black,line width=1.5pt] (\a:{\R-0.4215}) -- (\a:{\R+0.4663});
}

\foreach \s/\t in {63/99, 180/234, 279/360 }
{
    \draw[black,line width=1.5pt]
        (\s:{\R+0.44}) arc[start angle=\s,end angle=\t,radius={\R+0.44}];
}

\foreach \s/\t in {63/99,180/234,279/360}{
    \draw[black,line width=1.5pt]
        (\s:{\R-0.395}) arc[start angle=\s,end angle=\t,radius={\R-0.395}];
        }

\end{tikzpicture}
\end{subfigure}

\caption{Illustration of systematic rounding with $m=6$ and $k=3$.
Equivalently to our definition, wrap $[0,k)$ around a circle of circumference $k$,
rotate the $k$ equally spaced reference points by a uniformly random
offset, and select the candidates whose arcs contain a reference
point.} 
\label{fig:syssam}
\end{figure}

Systematic rounding satisfies lower quota because every interval
$C'$ of $\triangleright$ corresponds to a subinterval of $[0,k)$ of
length $x(C')$, which contains at least $\lfloor x(C')\rfloor$
sampling points. Lower quota alone, however, does not characterize
systematic rounding, as the following example shows.

\medskip

\begin{example}
    Let $x$ be a   fractional committee with $x_{c_i}=\nicefrac{2}{5}$ for all $i \in [5]$. Then, the lottery \[\nicefrac{1}{5}:\{c_1,c_3\},\nicefrac{1}{5}:\{c_1,c_4\},\nicefrac{1}{5}: \{c_3,c_4\},\nicefrac{2}{5}:\{c_2,c_5\}\] implements $x$ and satisfies lower quota but the committee $\{c_3,c_4\}$ is never selected by systematic rounding. 
\end{example}

Two natural strengthenings do yield a characterization: requiring both
lower and upper quota\footnote{A committee $W$ satisfies upper quota for some interval $C'\subseteq C$ with respect to $\triangleright$ if $|W\cap C'|\leq\lceil x(C')\rceil$.} for intervals of $\triangleright$, or requiring
lower quota for cyclic intervals. For
$c_1\triangleright\cdots\triangleright c_m$, let
$\triangleright^\circ$ denote the cyclic order obtained by making
$c_m$ and $c_1$ consecutive. Its intervals are the sets occurring
consecutively when $c_1,\ldots,c_m$ is read cyclically. The resulting
characterization answers an open question of
\citet{Tripet02012026}. The proof is deferred to \Cref{app:DecomposingFractionalCommittee}.

\begin{restatable}{theorem}{thmsyssampling}\label{thm:syssampling}
    Let $x$ be a fractional committee of sum $k$ and $\triangleright$ be a linear order over $C$. Then, for any lottery $\mathcal{L}$ implementing $x$, the following three statements are equivalent: 
    \begin{enumerate}[label=(\roman*)]
        \item $\mathcal{L}$ is induced by \sysroundinginput, \label{it:sysround}
        \item every committee in the support of $\mathcal{L}$ satisfies upper and lower quota for all intervals on $\triangleright$, and\label{it:lowupp}
        \item every committee in the support of $\mathcal{L}$ satisfies lower quota for all intervals of the cyclic extension $\triangleright^{\circ}$. \label{it:circle}
    \end{enumerate}
\end{restatable}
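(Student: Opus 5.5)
We prove the cycle of implications (i)$\Rightarrow$(iii)$\Rightarrow$(ii)$\Rightarrow$(i).

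\textbf{(i)$\Rightarrow$(iii).} Place $[0,k)$ on a circle of circumference $k$ as in \Cref{fig:syssam}. A cyclic interval $C'$ of $\triangleright^\circ$ then corresponds to an arc of length $x(C')$, namely a union of consecutive half-open intervals $I_c$, possibly wrapping around $0$. The sampling points $\theta,\theta+1,\dots,\theta+k-1$ are equally spaced with gap $1$, so any half-open arc of length $L$ contains at least $\lfloor L\rfloor$ of them. Each selected candidate corresponds to exactly one point, because $x_c\le 1$ implies that $I_c$ contains at most one point. Hence $|W\cap C'|\ge\lfloor x(C')\rfloor$.

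\textbf{(iii)$\Rightarrow$(ii).} Lower quota for ordinary intervals is a special case of lower quota for cyclic intervals. For upper quota, take an interval $C'$ of $\triangleright$. Its complement $C\setminus C'$ is a cyclic interval, so lower quota gives
\[|W\cap (C\setminus C')|\ge\lfloor k-x(C')\rfloor = k-\lceil x(C')\rceil.\]
Since $|W|=k$, it follows that $|W\cap C'|\le\lceil x(C')\rceil$.

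\textbf{(ii)$\Rightarrow$(i).} This is the main step. We first characterize the committees satisfying (ii) through prefixes. Let $P_j=\{c_1,\dots,c_j\}$ and $X_j=x(P_j)$.

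Applying (ii) to the prefixes $P_j$ gives $|W\cap P_j|\in\{\lfloor X_j\rfloor,\lceil X_j\rceil\}$.

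Applying (ii) to a general interval $\{c_{a+1},\dots,c_b\}$ gives
\[|W\cap P_b|-|W\cap P_a|\in\{\lfloor X_b-X_a\rfloor,\lceil X_b-X_a\rceil\}.\]

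Write $|W\cap P_j|=\lceil X_j-\theta_j\rceil$, where $\theta_j$ ranges over some subinterval of $[0,1)$ recording whether prefix $j$ rounded down or up. A committee produced by systematic rounding with offset $\theta$ has exactly $|W\cap P_j|=\lceil X_j-\theta\rceil$ for all $j$, using a common $\theta$.

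The key claim is that the interval constraints force this common threshold. The prefixes that round up must be exactly those whose fractional parts $\{X_j\}$ exceed some cutoff. Suppose the fractional parts are not cut by a single threshold: some $j$ with larger fractional part rounds down while some $j'$ with smaller fractional part rounds up. Then the interval between $P_j$ and $P_{j'}$, in whichever order they appear, violates lower or upper quota; this is a short case check on fractional parts. So every committee satisfying (ii) equals $W(\theta)$, the systematic-rounding committee for some $\theta$. The same $W$ arises for an arc of $\theta$-values between consecutive fractional parts (including $0$).

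It remains to show that the probabilities of $\mathcal{L}$ are forced. Sort the distinct fractional parts $0=f_0<f_1<\dots<f_r<1$. The support of $\mathcal{L}$ lies among the committees $W_0,\dots,W_r$, where $W_t$ is the committee for $\theta\in(f_t,f_{t+1}]$, suitably handling the half-open convention and taking $f_{r+1}=1$. Marginals pin down the weights: the candidate $c_j$ with $X_{j-1}$ having fractional part $f_t$ and crossing that threshold is selected exactly in committees $W_s$ on one side. Hence the differences of the $x_c$ sums determine cumulative weights $\sum_{s\le t}\lambda_{W_s}=f_{t+1}$, matching the uniform law of $\theta$.

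We expect the main obstacle to be the bookkeeping in this last step. It involves the boundary conventions (half-open intervals and ties of fractional parts), and it requires showing that the expected prefix counts $\mathbb{E}|W\cap P_j|=X_j$ determine the weights. For this we will use the identity $\mathbb{E}\lceil X_j-\theta\rceil=X_j$ iff $\Pr[\theta<\{X_j\}]=\{X_j\}$, applied to each $j$.
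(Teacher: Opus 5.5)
Your proposal is correct and follows essentially the paper's route. It uses the complement argument relating cyclic lower quota to linear upper quota, the claim that prefix rounding is monotone in the fractional part $X_j-\lfloor X_j\rfloor$ (lower quota when the smaller-residual prefix comes first, upper quota otherwise), and recovers the weights from $\Pr_{\mathcal L}[\text{prefix } j \text{ rounded up}]=X_j-\lfloor X_j\rfloor$. The differences are cosmetic: you prove (i)$\Rightarrow$(iii) explicitly via the circle, which the paper leaves to a main-text remark; you should state the threshold claim for ``larger or equal'' fractional parts (the same case check covers ties); and $W_t$ should correspond to $\theta\in[f_t,f_{t+1})$ rather than $(f_t,f_{t+1}]$.
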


\paragraph{Warm-Up Result}
We can now apply \Cref{thm:syssampling} to obtain our first positive result for candidate-interval profiles. 

\begin{proposition}\label{lem:warmup}
    For any candidate-interval election $E$, any fractionally core-stable committee $x$ and committee $W$ returned by \sysrounding{($x$,$\triangleright$)}, $W$ is core stable. 
\end{proposition}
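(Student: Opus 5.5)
The plan is to argue by contradiction and transfer an integral blocking coalition for $W$ into a fractional blocking coalition for $x$. The only property of systematic rounding needed is lower quota for intervals of $\triangleright$. This is guaranteed by \Cref{thm:syssampling}, direction (i)$\Rightarrow$(ii), or directly by the observation that an interval of length $x(C')$ in $[0,k)$ contains at least $\lfloor x(C')\rfloor$ sampling points.

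First, fix a committee $W$ in the support of \sysrounding{($x$,$\triangleright$)}. Suppose some nonempty coalition $S$ blocks $W$ via a (partial) committee $T$ with $|T|\le k\cdot\nicefrac{|S|}{n}$ and $|A_i\cap T|>|A_i\cap W|$ for all $i\in S$. Since the election is candidate-interval with respect to $\triangleright$, every approval set $A_i$ is an interval of $\triangleright$. Lower quota then gives $|W\cap A_i|\ge\lfloor x(A_i)\rfloor$.

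Second, combine these facts. Because $|A_i\cap T|$ is an integer strictly larger than $|A_i\cap W|\ge\lfloor x(A_i)\rfloor$, we get $|A_i\cap T|\ge\lfloor x(A_i)\rfloor+1>x(A_i)$. The key point is that the integrality of the left-hand side turns the floor into a strict inequality against the fractional value.

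Third, take $y=\mathbf{1}_T\in[0,1]^C$. Then $y(C)=|T|\le k\cdot\nicefrac{|S|}{n}$, and for every $i\in S$ we have $u_i(y)=|A_i\cap T|>x(A_i)=u_i(x)$. Hence $S$ fractionally blocks $x$ with the (partial) fractional committee $y$, contradicting fractional core stability of $x$. There is no real obstacle; the only step requiring care is the integrality argument in the second step, and confirming that the definition of fractional blocking allows partial fractional committees $y$ with $y(C)<k$, which it does.
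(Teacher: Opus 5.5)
Your proof is correct and follows the paper's argument essentially step for step. Both use lower quota on each interval approval set $A_i$, use integrality of $u_i(T)$ to get $u_i(T)\geq\lfloor u_i(x)\rfloor+1>u_i(x)$, and conclude that $\mathbf{1}_T$ fractionally blocks $x$.
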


\begin{proof}
    Assume for contradiction that $W$ is not core stable, i.e., there exist a coalition $S\subseteq N$ and a (partial) committee $T\subseteq C$ such that $|T|\leq |S|\frac{k}{n}$ and $u_i(T)>u_i(W)$ for all $i\in S$. Since for all voters $i$ the approval set $A_i$ is an interval of $\triangleright$, we can apply \Cref{thm:syssampling} and obtain that for every voter $i\in N$ it holds that $u_i(W) = |A_i \cap W|\geq \lfloor x(A_i)\rfloor =\lfloor u_i(x)\rfloor$. Moreover, since $u_i(T)$ and $u_i(W)$ are integers 
    \[u_i(T)\geq u_i(W)+1\geq \lfloor u_i(x)\rfloor+1 > u_i(x). \] Thus, $T$ fractionally blocks $x$, which is a contradiction to $x$ being fractionally core stable. 
\end{proof}
In \Cref{sec:ExactCore}, we substantially generalize \Cref{lem:warmup} along two dimensions: we move from binary to additive utilities and relax the requirement that preferences be perfectly consistent with a linear order. We introduce and discuss the resulting domain in the next section.

\section{Antichain Single-Peakedness} \label{sec:antichain}

Candidate-interval profiles are a special case of \emph{single-peaked
preferences}, which allow general additive utilities while retaining a common candidate order.

\smallskip
\noindent \textbf{Single-peaked preferences:}
An election \(E=(N,C,U,k)\) is \emph{single-peaked} if there exists a linear order \(\triangleright\) over \(C\) such that, for every voter \(i\in N\) and candidates \(a\triangleright b\triangleright c\), we have \(u_i(b)\geq\min\{u_i(a),u_i(c)\}\). Equivalently, every upper contour set \(\{c\in C:u_i(c)\geq\tau\}\), where $\tau \in \mathbb{R}$, is an interval with respect to \(\triangleright\). Thus, each voter's utility is quasiconcave along the common order, with
a possibly nonunique peak.

\smallskip

In two recent works on computational problems with approval ballots, \citet{dong2026reconfiguring} and \citet{avramidis_computing_2026} extend algorithms for their respective problems from candidate-interval profiles to broader domains. Both results exploit the observation that for their respective domains every ``dominance-free'' subinstance belongs to the candidate-interval domain. The same structural property will be useful in our setting and motivates the following definitions. 

\smallskip
\noindent\textbf{Dominance relation:}
Given an election, candidate $c$ \emph{dominates}
candidate $d$ if $u_i(c)\geq u_i(d)$ for every $i\in N$, with strict inequality for at least one voter. This defines a strict partial order. A subset $C'\subseteq C$ is an \emph{antichain} if it contains no \emph{dominating pair}, i.e., no $c,d\in C'$ with $c$ dominating $d$.

\smallskip
\noindent \textbf{Antichain single-peakedness:}
An election $E$ is \emph{antichain single-peaked} if there exists a
linear order $\triangleright$ over $C$ such that, for every antichain
$C'$ of the dominance relation, the utility profile $U$ restricted to $C'$ is
single-peaked with respect to the order induced by restricting $\triangleright$ on
$C'$.
\smallskip

\begin{figure*}[t]
    \centering

    \begin{subfigure}[t]{0.48\textwidth}
        \centering
        \resizebox{\linewidth}{!}{
        \begin{tikzpicture}[
            x=0.75cm,
            y=0.4cm,
            font=\normalsize
        ]

        \def\TopBase{9.75}

        \path[use as bounding box] (-1.2,-0.7) rectangle (9.4,18.1);

        \colorlet{profilecol}{c1col}
        \colorlet{guidecol}{black!18}
        \colorlet{layeredge}{blue!35!black}
        \colorlet{domline}{black!28}
        \colorlet{domfill}{black!8}
        \colorlet{domedge}{black!25}

        \colorlet{layerone}{c1col!7}
        \colorlet{layertwo}{c1col!9}
        \colorlet{layerthree}{c1col!11}
        \colorlet{layerfour}{c1col!13}
        \colorlet{layerfive}{c1col!15}
        \colorlet{layersix}{c1col!17}
        \colorlet{layerseven}{c1col!19}

        \foreach \x in {0,...,9}
        {
            \draw[densely dashed,guidecol]
                (\x,7.05) -- (\x,\TopBase-0.08);
        }

        \draw[->] (0,\TopBase) -- (9.15,\TopBase);
        \draw[->]
            (0,\TopBase-0.15) -- (0,\TopBase+7.65)
            node[above] {\normalsize $u_i$};

        \foreach \u in {1,...,7}
        {
            \draw[black!10]
                (0,\TopBase+\u) -- (9,\TopBase+\u);
        }

        \foreach \j [evaluate=\j as \xc using \j-0.5]
            in {1,2,3,4,6,8,9}
        {
            \node[below] at (\xc,\TopBase) {$c_{\j}$};
        }

        \node[below] at (4.5,\TopBase) {$c_5$};
        \node[below] at (6.5,\TopBase) {$c_7$};

        \foreach \x/\u in {
            0.5/0,
            1.5/1,
            2.5/7,
            3.5/6,
            5.5/5,
            7.5/2,
            8.5/0
        }
        {
            \draw[profilecol!25]
                (\x,\TopBase) -- (\x,\TopBase+\u);
        }

        \foreach \x/\u in {
            4.5/8,
            6.5/1
        }
        {
            \draw[domline]
                (\x,\TopBase) -- (\x,\TopBase+\u);
        }

        \draw[profilecol,line width=1.2pt]
            (0.5,\TopBase)
            -- (1.5,\TopBase+1)
            -- (2.5,\TopBase+7)
            -- (3.5,\TopBase+6)
            -- (5.5,\TopBase+5)
            -- (7.5,\TopBase+2)
            -- (8.5,\TopBase);

        \draw[domline,line width=1.2pt]
            (3.5,\TopBase+6)
            -- (4.5,\TopBase+8)
            -- (5.5,\TopBase+5);

        \draw[domline,line width=1.2pt]
            (5.5,\TopBase+5)
            -- (6.5,\TopBase+1)
            -- (7.5,\TopBase+2);

        \foreach \x/\u in {
            0.5/0,
            1.5/1,
            2.5/7,
            3.5/6,
            5.5/5,
            7.5/2,
            8.5/0
        }
        {
            \fill[profilecol]
                (\x,\TopBase+\u) circle (2.1pt);
        }

        \foreach \x/\u in {
            4.5/8,
            6.5/1
        }
        {
            \fill[domline]
                (\x,\TopBase+\u) circle (2.1pt);
        }


        \draw[fill=layerone,draw=layeredge,line width=0.8pt]
            (1,0) rectangle (8,1);

        \draw[fill=layertwo,draw=layeredge,line width=0.8pt]
            (2,1) rectangle (6,2);
        \draw[fill=layertwo,draw=layeredge,line width=0.8pt]
            (7,1) rectangle (8,2);

        \draw[fill=layerthree,draw=layeredge,line width=0.8pt]
            (2,2) rectangle (6,5);

        \draw[fill=layerfour,draw=layeredge,line width=0.8pt]
            (2,5) rectangle (5,6);

        \draw[fill=layerfive,draw=layeredge,line width=0.8pt]
            (2,6) rectangle (3,7);

        \draw[fill=layerfive,draw=layeredge,line width=0.8pt]
            (4,6) rectangle (5,7);
 
        \draw[fill=layersix,draw=layeredge,line width=0.8pt]
            (4,7) rectangle (5,8);

        \foreach \y [evaluate=\y as \ytop using \y+1] in {0,...,7}
        {
            \draw[
                fill=domfill,
                draw=domline,
                line width=0.8pt
            ]
                (4,\y) rectangle (5,\ytop);
        }

        \foreach \y [evaluate=\y as \ytop using \y+1] in {0}
        {
            \draw[
                fill=domfill,
                draw=domline,
                line width=0.8pt
            ]
                (6,\y) rectangle (7,\ytop);
        }

        \foreach \x in {0,...,9}
        {
            \draw[black!16] (\x,0) -- (\x,8);
        }

        \foreach \y in {0,1,2,5,6,7,8}
        {
            \draw[black!25,dashed] (0,\y) -- (9,\y);
        }

        \node[left] at (2.15,0.5) {$H_i^{\varphi_6}$};
        \node[left] at (3.15,1.5) {$H_i^{\varphi_5}$};
        \node[left] at (3.15,3.5) {$H_i^{\varphi_4}$};
        \node[left] at (2.95,5.5) {$\cdots$};

        \foreach \j [evaluate=\j as \xc using \j-0.5]
            in {1,2,3,4,6,8,9}
        {
            \node[below] at (\xc,0) {$c_{\j}$};
        }

        \node[below] at (4.5,0) {$c_5$};
        \node[below] at (6.5,0) {$c_7$};

        \node[left,font=\small] at (0,0) {${\varphi_7}$};
        \node[left,font=\small] at (0,1) {${\varphi_6}$};
        \node[left,font=\small] at (0,2) {${\varphi_5}$};
        \node[left,font=\small] at (0,5) {${\varphi_4}$};
        \node[left,font=\small] at (0,6) {${\varphi_3}$};
        \node[left,font=\small] at (0,7) {${\varphi_2}$};
        \node[left,font=\small] at (0,8) {${\varphi_1}$};

        \end{tikzpicture}
        }
        \label{fig:with-dominated}
    \end{subfigure}
    \hfill
    \begin{subfigure}[t]{0.48\textwidth}
        \centering
        \resizebox{\linewidth}{!}{
        \begin{tikzpicture}[
            x=0.75cm,
            y=0.4cm,
            font=\normalsize
        ]

        \def\TopBase{9.75}

        \path[use as bounding box] (-1.2,-0.7) rectangle (9.4,18.1);

        \colorlet{profilecol}{c1col}
        \colorlet{guidecol}{black!18}
        \colorlet{layeredge}{blue!35!black}

        \colorlet{layerone}{c1col!7}
        \colorlet{layertwo}{c1col!9}
        \colorlet{layerthree}{c1col!11}
        \colorlet{layerfour}{c1col!13}
        \colorlet{layerfive}{c1col!15}
        \colorlet{layersix}{c1col!17}
        \colorlet{layerseven}{c1col!19}


        \foreach \x in {0,...,7}
        {
            \draw[densely dashed,guidecol]
                (\x,7.05) -- (\x,\TopBase-0.08);
        }

        \draw[->] (0,\TopBase) -- (7.15,\TopBase);

        \draw[->]
            (0,\TopBase-0.15) -- (0,\TopBase+7.65)
            node[above] {\normalsize $u_i$};

        \foreach \u in {1,...,7}
        {
            \draw[black!10]
                (0,\TopBase+\u) -- (7,\TopBase+\u);
        }

        \node[below] at (0.5,\TopBase) {$c_1$};
        \node[below] at (1.5,\TopBase) {$c_2$};
        \node[below] at (2.5,\TopBase) {$c_3$};
        \node[below] at (3.5,\TopBase) {$c_4$};
        \node[below] at (4.5,\TopBase) {$c_6$};
        \node[below] at (5.5,\TopBase) {$c_8$};
        \node[below] at (6.5,\TopBase) {$c_9$};

        \foreach \x/\u in {
            0.5/0,
            1.5/1,
            2.5/7,
            3.5/6,
            4.5/5,
            5.5/2,
            6.5/0
        }
        {
            \draw[profilecol!25]
                (\x,\TopBase) -- (\x,\TopBase+\u);
        }

        \draw[profilecol,line width=1.2pt]
            (0.5,\TopBase)
            -- (1.5,\TopBase+1)
            -- (2.5,\TopBase+7)
            -- (3.5,\TopBase+6)
            -- (4.5,\TopBase+5)
            -- (5.5,\TopBase+2)
            -- (6.5,\TopBase);

        \foreach \x/\u in {
            0.5/0,
            1.5/1,
            2.5/7,
            3.5/6,
            4.5/5,
            5.5/2,
            6.5/0
        }
        {
            \fill[profilecol]
                (\x,\TopBase+\u) circle (2.1pt);
        }


        \draw[fill=layerone,draw=layeredge,line width=0.8pt]
            (1,0) rectangle (6,1);

        \draw[fill=layertwo,draw=layeredge,line width=0.8pt]
            (2,1) rectangle (6,2);

        \draw[fill=layerthree,draw=layeredge,line width=0.8pt]
            (2,2) rectangle (5,5);

        \draw[fill=layerfour,draw=layeredge,line width=0.8pt]
            (2,5) rectangle (4,6);

        \draw[fill=layerfive,draw=layeredge,line width=0.8pt]
            (2,6) rectangle (3,7);

        \foreach \x in {0,...,7}
        {
            \draw[black!16] (\x,0) -- (\x,7);
        }

        \foreach \y in {0,1,2,5,6,7}
        {
            \draw[black!25,dashed]
                (0,\y) -- (7,\y);
        }

        \node[left] at (2.15,0.5) {$H_i^{\varphi_6}$};
        \node[left] at (3.15,1.5) {$H_i^{\varphi_5}$};
        \node[left] at (3.15,3.5) {$H_i^{\varphi_4}$};
        \node[left] at (2.95,5.5) {$\cdots$};

        \node[below] at (0.5,0) {$c_1$};
        \node[below] at (1.5,0) {$c_2$};
        \node[below] at (2.5,0) {$c_3$};
        \node[below] at (3.5,0) {$c_4$};
        \node[below] at (4.5,0) {$c_6$};
        \node[below] at (5.5,0) {$c_8$};
        \node[below] at (6.5,0) {$c_9$};

        \node[left,font=\small] at (0,0) {${\varphi_7}$};
        \node[left,font=\small] at (0,1) {${\varphi_6}$};
        \node[left,font=\small] at (0,2) {${\varphi_5}$};
        \node[left,font=\small] at (0,5) {${\varphi_4}$};
        \node[left,font=\small] at (0,6) {${\varphi_3}$};
        \node[left,font=\small] at (0,7) {${\varphi_2}$};

        \end{tikzpicture}
        }
        \label{fig:domfree-sp}
    \end{subfigure}

\caption[Illustration of the cake-layer representation.]{
Cake-layer representation of a voter's utility function.
Consider an antichain single-peaked profile whose only dominance
relations between distinct candidates are that $c_5$ dominates $c_4$
and $c_8$ dominates $c_7$.
The upper plots show voter $i$'s utility along a common witnessing
candidate order, before (left) and after (right) removing the gray
candidates $c_5,c_7$. The remaining candidates form a dominance-free
subset, on which $u_i$ is single-peaked and every upper contour set
is an interval.
Let $\varphi_1>\cdots>\varphi_{\ell_i}>\varphi_{\ell_i+1}=0$
list the distinct positive values of $u_i$ followed by zero, and define
$H_i^{\varphi_j}=\{c\in C:u_i(c)\geq\varphi_j\}$.
For any committee $W$,
$u_i(W)=\sum_{c\in W}u_i(c)
=\sum_{j=1}^{\ell_i}(\varphi_j-\varphi_{j+1})
|W\cap H_i^{\varphi_j}|$.
The two sums compute the same area by candidate columns
and upper-contour layers, respectively.
}
    \label{fig:cake-comparison}
\end{figure*}

We refer to \Cref{fig:cake-comparison} for an illustration of antichain single-peaked preferences. Although the above definition is abstract, the domain contains several domains with intuitive interpretations. For example: 

\smallskip
\noindent \textbf{Risk-averse single-peakedness.}
An election \(E\) is \emph{risk-averse single-peaked} if every candidate \(c\in C\) can be represented by a closed interval \(I_c=[\ell_c,r_c] \subseteq\mathbb R\) such that  
every voter \(i\in N\) can be represented by a quasiconcave function \(f_i\colon\mathbb R\to\mathbb R_{\geq 0}\), and \(u_i(c)=\min_{x\in I_c}f_i(x)\).
\smallskip

Here, \(I_c\) can represent uncertainty about the policy that candidate \(c\) will implement, while \(f_i(x)\) is voter \(i\)'s utility for policy \(x\). A risk-averse voter evaluates a candidate by the least desirable policy in the candidate's interval. Interval inclusion also admits a hierarchical interpretation: if \(I_c\subseteq I_d\), then \(u_i(c)\geq u_i(d)\) for every voter \(i\), thus \(c\) can dominate \(d\).

\smallskip

In \Cref{app:antichain}, we show that every risk-averse single-peaked
profile is antichain single-peaked. The former domain also includes
several generalizations of candidate-interval (and thus approval-based) elections, among them
the \emph{linearly consistent} approval elections of
\citet{pierczynski_core-stable_2021}. As observed by
\citet{avramidis_computing_2026}, results of
\citet{saha_permutation_2014} imply that every linearly consistent profile admits a
representation in which voters and candidates correspond to intervals
$I_i$ and $I_c$, with voter $i$ approving candidate $c$ exactly when
$I_c\subseteq I_i$. We can construct an equivalent risk-averse single-peaked election by setting $f_i=\mathbf{1}_{I_i}$, since $\min_{x\in I_c}f_i(x)=1$ exactly when
$I_c\subseteq I_i$. As linearly consistent profiles also generalize
voter-candidate-interval (VCI) \citep{godziszewski2021analysis} and thus voter-interval and candidate-interval elections \citep{elkind2015structure}, these domains are covered as well. However, in the additive case, single-crossing profiles \citep{gans1996majority} need not be antichain single-peaked (see \Cref{app:antichain}).

\section{Ex-ante Core Stability} \label{sec:ExactCore}
In this section, we show how a core-stable fractional committee $x$
can be decomposed into a lottery whose outcomes provide ex-post proportionality guarantees. To obtain these guarantees, we
impose one additional, relatively mild efficiency condition on $x$.

\smallskip
\noindent\textbf{Dominance-respecting committees.}
A fractional committee $x$ is \emph{dominance-respecting} if, for every pair of candidates $c,d\in C$ where $c$ dominates $d$, $x_d>0$ implies $x_c=1$.

\smallskip

We later show that applying \sysrounding to a fractionally core-stable,
dominance-respecting fractional committee yields ex-post
proportionality guarantees. Such committees always exist: in
\Cref{app:ExactCore}, we prove that every Lindahl equilibrium satisfying a property called \emph{zero-respecting} \citep{10.1145/3736252.3742510} is dominance-respecting. Such equilibria always exist, but may have size less than $k$; in this case, they can be extended to size $k$ while preserving both core
stability and dominance-respectiveness. Moreover, Lindahl equilibria are fractionally core stable
\citep{foley1970lindahl,fain2016core} and can be approximated
arbitrarily closely in polynomial time
\citep{10.1145/3736252.3742510}.

As a key step, we show that every committee obtained by applying \sysrounding to a dominance-respecting fractional committee $x$ leaves each voter at most one candidate away from attaining their utility under $x$. We give a sketch here and the full proof in \Cref{app:ExactCore}.

\begin{restatable}{lemma}{lemSysRoundingApprox}\label{lem:sysroundingApprox}
Given an antichain single-peaked election and a dominance-respecting fractional committee $x$, let $W$ be any outcome of \sysrounding on $(x,\triangleright)$. Then, for every voter $i\in N$, either $u_i(W)\geq u_i(x)$, or there exists a candidate $c_i^+\in C\setminus W$ such that $u_i(W\cup\{c_i^+\})>u_i(x).$
\end{restatable}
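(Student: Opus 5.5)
The plan is to reduce to the fractional candidates, which form an antichain. On that antichain, $u_i$ is single-peaked, so we can combine the quota guarantees of \Cref{thm:syssampling} with the cake-layer decomposition of \Cref{fig:cake-comparison}.

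\emph{Step 1 (structure of $W$).} Partition $C$ into $C_1=\{c: x_c=1\}$, $C_0=\{c:x_c=0\}$ and $F=\{c:0<x_c<1\}$. In \sysrounding, an interval $I_c$ of length $1$ contains exactly one of the points $\theta,\theta+1,\dots,\theta+k-1$, and an empty interval contains none. Hence $C_1\subseteq W$ and $W\cap C_0=\emptyset$. Consequently
\[u_i(W)-u_i(x)=u_i(W\cap F)-\sum_{c\in F}x_cu_i(c).\]
Moreover, $F$ is an antichain. Indeed, if $c,d\in F$ with $c$ dominating $d$, then $x_d>0$ forces $x_c=1$ by dominance-respectiveness, a contradiction. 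So by antichain single-peakedness, $u_i$ restricted to $F$ is single-peaked with respect to $\triangleright|_F$.

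\emph{Step 2 (quota on $F$).} Let $J$ be an interval of $\triangleright|_F$, and let $J'$ be the smallest interval of $\triangleright$ containing $J$. Then $J'\setminus J\subseteq C_0\cup C_1$, so $x(J')=x(J)+|J'\cap C_1|$ and $|W\cap J'|=|W\cap J|+|J'\cap C_1|$. By \Cref{thm:syssampling}\ref{it:lowupp}, $W$ satisfies lower quota on $J'$. Subtracting the integer $|J'\cap C_1|$ from both sides gives $|W\cap J|\geq\lfloor x(J)\rfloor$.

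\emph{Step 3 (cake layers).} Let $\varphi_1>\dots>\varphi_\ell>\varphi_{\ell+1}=0$ be the distinct positive values of $u_i$ on $F$ followed by zero, and set $H_j=\{c\in F: u_i(c)\geq\varphi_j\}$. These sets are nested, $H_1\subseteq\dots\subseteq H_\ell$, and each is an interval of $\triangleright|_F$ by single-peakedness. Writing $\Delta_j=\varphi_j-\varphi_{j+1}>0$, we have
\[u_i(W\cap F)=\sum_j\Delta_j|W\cap H_j| \quad\text{and}\quad u_i(x|_F)=\sum_j\Delta_j x(H_j).\]
Let $D=\{j: |W\cap H_j|<x(H_j)\}$. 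If $D=\emptyset$, every layer is covered and $u_i(W)\geq u_i(x)$. Otherwise, let $j_0=\min D$. Since $x(H_{j_0})\leq|H_{j_0}|$, there exists $c_i^+\in H_{j_0}\setminus W$.

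\emph{Step 4 (adding $c_i^+$).} Because the layers are nested, $c_i^+\in H_j$ for every $j\geq j_0$, so adding it increases $|W\cap H_j|$ by one on all these layers. By Step 2, each such layer then has $|W\cap H_j|+1\geq\lfloor x(H_j)\rfloor+1>x(H_j)$. Layers with $j<j_0$ are not in $D$, so they already satisfy $|W\cap H_j|\geq x(H_j)$. Summing with weights $\Delta_j>0$, and using that layer $j_0$ is strict, gives $u_i(W\cup\{c_i^+\})>u_i(x)$.

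The main subtlety I expect is Step 2: the quota guarantee of \Cref{thm:syssampling} is stated for intervals of $\triangleright$, and it has to be transferred to intervals of the restricted order on $F$. This works only because every non-$F$ candidate contributes an integer both to $x$ and to $W$, namely $0$ for $C_0$ and $1$ for $C_1$.
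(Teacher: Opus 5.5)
Your proposal is correct and follows essentially the same route as the paper. You restrict to the fractional candidates, which form an antichain by dominance-respectiveness and are therefore single-peaked; you use lower quota on the nested upper contour sets; and you add one unselected candidate lying in every deficient layer of the cake-layer decomposition.

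The differences are cosmetic. You transfer lower quota to intervals of $\triangleright|_F$ directly by padding with integral candidates, whereas the paper argues that $W\cap C^f$ is itself an outcome of \sysrounding on the restricted instance. You also take $c_i^+$ from the smallest deficient layer rather than as a highest-utility unselected candidate.
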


\begin{proof}[Proof Sketch]
We first consider single-peaked elections. Fix a voter $i\in N$ and,
for $\varphi>0$, let
$H_i^\varphi=\{c\in C:u_i(c)\geq\varphi\}$. Every such set is an
interval of $\triangleright$, so lower quota gives
$|W\cap H_i^\varphi|\geq\lfloor x(H_i^\varphi)\rfloor$.
Suppose that $u_i(W)<u_i(x)$. By the cake-layer representation (see \Cref{fig:cake-comparison}) of the voter $i$'s utility, $$u_i(W) = \sum_{j}(\varphi_j - \varphi_{j+1})|W \cap H_i^{\varphi_j}|,\vspace{-0.3cm}$$ some upper contour set is \emph{deficient}, that is,
$|W\cap H_i^{\varphi_j}|<x(H_i^{\varphi_j})$. Let $c_i^+$ be a
highest-utility candidate outside $W$. Every deficient upper contour
set contains an unselected candidate and therefore also contains
$c_i^+$. Lower quota then implies
$|(W\cup\{c_i^+\})\cap H_i^\varphi|>x(H_i^\varphi)$ for every
deficient contour set, while the inequality is weak for all others.
The cake-layer representation thus gives
$u_i(W\cup\{c_i^+\})>u_i(x)$.

For antichain single-peaked elections, dominance-respectiveness ensures
that the fractionally selected candidates form an antichain, so the
profile restricted to them is single-peaked. Applying the single-peaked lemma to this restriction and adding back the candidates selected
with probability one proves the claim. 
\end{proof}

Before turning to our positive results, we illustrate that ex-ante core
stability is incompatible even with ex-post JR under single-peaked
additive preferences. In \Cref{sec:approxCore}, we strengthen this
observation to ex-ante core approximations strictly better than $2$.

\begin{example}\label{thm:easy-impossibility}
Fix $\varepsilon\in(0,\nicefrac{1}{2})$ and consider the following
single-peaked profile with $n=2$ and $k=2$:
\[
\begin{array}{c|ccc}
 & c_1 & c_2 & c_3 \\
\hline
1 & 1 & \nicefrac{1}{2}+\varepsilon & 0 \\
2 & 0 & \nicefrac{1}{2}+\varepsilon & 1
\end{array}
\]
The unique JR committee is $\{c_1,c_3\}$, since each voter alone is
entitled to one representative and can guarantee utility $1$.
Consequently, every ex-post JR lottery induces
$x=(1,0,1)$. However, the feasible fractional committee
$y=(\nicefrac{1}{2},1,\nicefrac{1}{2})$ gives both voters utility
$1+\varepsilon>1=u_i(x)$ and therefore fractionally blocks $x$.
Hence, no ex-post JR lottery is ex-ante core stable.
\end{example}

A related impossibility without the single-peakedness restriction was
shown by \citet{aziz2024fair}.

Instead of forcing exact ex-post guarantees, we show that applying \sysrounding to fractionally core-stable committees yields lotteries that satisfy an approximate form of ex-post core stability. This notion is directly inspired by the analogous relaxation of EJR up to one candidate by \citet{3540261.3541235} and has recently been introduced independently by \cite{skowron2026core} for the case of participatory budgeting under approval utilities.\footnote{A stronger variant of EJR up to one candidate was subsequently introduced by \citet{peters2022proportionalparticipatorybudgetingadditive}. Our notion of core stability up to one candidate generalizes the weak variant.}

\paragraph{Up-To-One-Candidate Core Stability:}
Given an election $E$, an integral committee \(W\) is said to be \emph{core stable up to one candidate} if, for each nonempty $S \subseteq N$ and each (partial) committee $T \subseteq C$ with $|T| \leq k  \frac{|S|}{n}$, there exists a voter $i \in S$ such that \(u_i(W) \geq u_i(T)\) or there exists a candidate $c_i^+ \in C \setminus W$ such that $u_i(W \cup \{c_i^+\}) > u_i(T)$.

\begin{theorem}\label{thm:coreUpToOne}
    Under antichain single-peaked preferences, applying \sysrounding to any fractionally core stable and dominance-respecting committee $x$ yields a lottery that is ex-post core stable up to one candidate. 
\end{theorem}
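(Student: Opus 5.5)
The plan is to combine \Cref{lem:sysroundingApprox} with fractional core stability of $x$, mirroring the proof of \Cref{lem:warmup}. Fix any committee $W$ in the support of the lottery produced by \sysrounding on $(x,\triangleright)$. Suppose, for contradiction, that $W$ is not core stable up to one candidate. Then there is a nonempty coalition $S\subseteq N$ and a (partial) committee $T\subseteq C$ with $|T|\le k\cdot\nicefrac{|S|}{n}$ such that, for every $i\in S$, we have $u_i(W)<u_i(T)$, and moreover $u_i(W\cup\{c\})\le u_i(T)$ for every $c\in C\setminus W$.

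Next, fix a voter $i\in S$ and apply \Cref{lem:sysroundingApprox} to $i$. This is where antichain single-peakedness and the dominance-respecting property of $x$ are used. The lemma gives two cases.

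\begin{itemize}
\item If $u_i(W)\ge u_i(x)$, then $u_i(T)>u_i(W)\ge u_i(x)$.
\item Otherwise there is a candidate $c_i^+\in C\setminus W$ with $u_i(W\cup\{c_i^+\})>u_i(x)$. Since $u_i(W\cup\{c_i^+\})\le u_i(T)$ by the failure assumption, we again get $u_i(T)>u_i(x)$.
\end{itemize}

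In both cases $u_i(T)>u_i(x)$ for every $i\in S$.

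Now view $T$ as the fractional vector $\mathbf{1}_T\in[0,1]^C$. Its total size is $|T|\le k\cdot\nicefrac{|S|}{n}$, so $S$ fractionally blocks $x$ via $\mathbf{1}_T$. This contradicts the fractional core stability of $x$.

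Nearly all the difficulty lies in \Cref{lem:sysroundingApprox}, which we take as given. The only point needing care is the quantifier structure of the definition: the failure of the up-to-one condition must hold for \emph{every} candidate outside $W$. This is exactly what allows us to use the particular witness $c_i^+$ supplied by the lemma. A minor edge case is when $W=C$, so that no $c_i^+$ exists; then $u_i(W)\ge u_i(x)$ holds trivially and the first case applies.
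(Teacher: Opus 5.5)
Your proof is correct and follows essentially the same route as the paper: you combine \Cref{lem:sysroundingApprox} with fractional core stability of $x$, with $\mathbf{1}_T$ as the blocking vector. The only difference is presentation: you argue by contradiction, while the paper argues directly, taking a voter $i\in S$ with $u_i(x)\geq u_i(T)$ from core stability and then applying the lemma to that voter.
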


\begin{proof} 
Let $W \sim \sysrounding(x,\triangleright)$, and fix a
nonempty coalition $S\subseteq N$ and candidate set $T\subseteq C$
with $|T|\leq \nicefrac{k|S|}{n}$. Since $x$ is fractionally core stable, some
voter $i\in S$ satisfies $u_i(x)\geq u_i(T)$. By
\Cref{lem:sysroundingApprox}, either
$u_i(W)\geq u_i(x)\geq u_i(T)$ or there exists
$c_i^+\in C\setminus W$ such that
$u_i(W\cup\{c_i^+\})>u_i(x)\geq u_i(T)$. 
\end{proof}

\Cref{ex:core_not_enough} (in \Cref{app:ExactCore}) shows that the dominance-respecting
requirement in \Cref{thm:coreUpToOne} is necessary.

\subsection{Consequences for Approval Preferences}

For approval-based antichain single-peaked elections, including
linearly consistent \citep{pierczynski_core-stable_2021} and
voter-candidate-interval \citep{godziszewski2021analysis},
\Cref{lem:sysroundingApprox} and \Cref{thm:coreUpToOne} are strengthened in \Cref{app:ExactCore}: 

\begin{restatable}{lemma}{lemUpToOneApproval} \label{claim:low-bound-approval}
For any approval-based and antichain single-peaked election, let $W$ be the outcome of applying \sysrounding to some dominance-respecting fractional committee $x$. Then, for every voter $i\in N$, $u_i(W)\geq \lfloor u_i(x) \rfloor$.
\end{restatable}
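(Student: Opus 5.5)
We prove the claim by separating the candidates with $x_c=1$ from the fractionally selected ones. Let $F=\{c\in C: x_c=1\}$ and $P=\{c\in C: 0<x_c<1\}$.

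\textbf{Step 1: $F\subseteq W$ and $W\subseteq F\cup P$.} Each $c\in F$ has an interval $I_c$ of length $1$ in $[0,k)$. Such an interval always contains exactly one sampling point, so $F\subseteq W$. A candidate with $x_c=0$ has an empty interval and is never selected, so $W\subseteq F\cup P$.

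\textbf{Step 2: $P$ is an antichain.} Suppose $c,d\in P$ and $c$ dominates $d$. Since $x_d>0$, dominance-respectiveness forces $x_c=1$, contradicting $c\in P$. Hence, by antichain single-peakedness, the approval profile restricted to $P$ is single-peaked with respect to $\triangleright|_P$. For binary utilities this means each set $A_i\cap P$ is an interval of $\triangleright|_P$.

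\textbf{Step 3: sampling on $P$ alone.} Candidates with $x_c=0$ have empty intervals, so deleting them changes nothing. Removing the candidates in $F$ is the subtle point, because their length-$1$ intervals sit between candidates of $P$ in the layout of $[0,k)$. The key observation is that deleting a length-$1$ interval from the line and shifting everything to its right left by $1$ preserves the sampling pattern. The points $\theta+j$ are spaced exactly $1$ apart, so exactly one of them falls in the deleted interval, and after the shift the remaining points are again $\theta,\theta+1,\dots$ modulo this relabeling.

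It follows that $W\cap P$ is exactly the outcome of \sysrounding applied to $x|_P$, which has sum $k-|F|$, with respect to $\triangleright|_P$ and the same $\theta$. Theorem~\ref{thm:syssampling} (lower quota) then gives, for every interval $C'$ of $\triangleright|_P$, that $|W\cap C'|\geq \lfloor x(C')\rfloor$. Alternatively, one can bypass the reduction and argue directly: the union of the intervals $I_c$ for $c\in A_i\cap P$, together with the intervals of the $F$-candidates lying between them, forms a contiguous segment of $[0,k)$. The segment's length is $x(A_i\cap P)+f$, where $f$ counts the $F$-candidates in between. It contains at least $\lfloor x(A_i\cap P)\rfloor+f$ points, of which exactly $f$ fall in $F$-intervals.

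\textbf{Step 4: conclude.} Combining the steps,
\[
u_i(W)=|A_i\cap F|+|W\cap A_i\cap P|\geq |A_i\cap F|+\lfloor x(A_i\cap P)\rfloor=\lfloor u_i(x)\rfloor,
\]
where the final equality holds because $|A_i\cap F|$ is an integer and $u_i(x)=|A_i\cap F|+x(A_i\cap P)$.

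The main obstacle is Step 3: verifying that removing the integral candidates does not break lower quota for the interval $A_i\cap P$, which need not be an interval of the full order $\triangleright$. The counting argument handles this, since each $F$-interval absorbs exactly one sampling point.
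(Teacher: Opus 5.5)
Your proof is correct, but it is organized differently from the paper's. The paper obtains this lemma as a two-line corollary of \Cref{lem:sysroundingApprox}. If $u_i(W)<u_i(x)$, that lemma supplies some $c_i^+\notin W$ with $u_i(W\cup\{c_i^+\})>u_i(x)$. Under approval utilities one extra candidate adds at most $1$, so integrality of $u_i(W)$ gives $u_i(W)\geq\lfloor u_i(x)\rfloor$.

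You bypass the up-to-one-candidate lemma and apply lower quota directly to the single set $A_i\cap P$. In the approval case the cake-layer decomposition has only one layer, so lower quota on that layer already yields the floor. Neither the deficient-layer argument nor the choice of $c_i^+$ is needed. The underlying ingredients are the ones used inside the proof of \Cref{lem:sysroundingApprox}:
\begin{itemize}
\item dominance-respectiveness makes the fractional candidates an antichain;
\item antichain single-peakedness then makes $A_i\cap P$ an interval of the restricted order;
\item integral candidates can be deleted without changing \sysrounding on the rest.
\end{itemize}

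The paper's route buys brevity and shows that the approval result is a specialization of the additive one. Yours is self-contained and handles the reduction step more explicitly than the paper, which covers it in a single sentence. Your shift argument and your alternative counting argument (each length-$1$ interval of an $F$-candidate absorbs exactly one sampling point) make that step precise.

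One minor point for the counting argument: candidates with $x_c=0$ may also lie inside the contiguous segment. They have zero-length intervals, so the count is unaffected.
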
 

\begin{restatable}{theorem}{thmExPostCore}\label{lem:expostcore}
    For elections that are both approval-based and antichain single-peaked, applying \sysrounding to any fractionally core stable and dominance-respecting committee $x$ yields a lottery that is ex-post core stable. 
\end{restatable}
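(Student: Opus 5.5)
The plan is to mirror the proof of \Cref{lem:warmup}, replacing the candidate-interval lower-quota argument with \Cref{claim:low-bound-approval}. Let $W$ be any outcome of $\sysrounding(x,\triangleright)$. Suppose, for contradiction, that $W$ is blocked. Then there are a nonempty coalition $S\subseteq N$ and a (partial) committee $T\subseteq C$ with $|T|\leq k\cdot\nicefrac{|S|}{n}$ and $u_i(T)>u_i(W)$ for every $i\in S$.

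The argument has three steps.
\begin{enumerate}
\item Since the election is approval-based, antichain single-peaked, and $x$ is dominance-respecting, \Cref{claim:low-bound-approval} gives $u_i(W)\geq\lfloor u_i(x)\rfloor$ for every voter $i$.
\item Because utilities are binary, both $u_i(T)=|A_i\cap T|$ and $u_i(W)$ are integers. Hence, for every $i\in S$,
\[
u_i(T)\geq u_i(W)+1\geq\lfloor u_i(x)\rfloor+1>u_i(x).
\]
\item The incidence vector $\mathbf{1}_T$ is a (partial) fractional committee of size $|T|\leq k\cdot\nicefrac{|S|}{n}$. By the previous step, every member of $S$ strictly prefers it to $x$, so $S$ fractionally blocks $x$. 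This contradicts the fractional core stability of $x$.
\end{enumerate}
Since $W$ was an arbitrary outcome, every committee in the support of the lottery is core stable.

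All the substance lies in \Cref{claim:low-bound-approval}, which we may assume; given that lemma, the theorem is essentially immediate. If I had to justify the lemma itself, the hard step would be the reduction to the antichain. Dominance-respectiveness implies that the candidates with $0<x_c<1$ form an antichain, since a dominated candidate with positive weight forces its dominator to $x_c=1$. Restricted to this antichain, each approval set is an interval of $\triangleright$, so lower quota from \Cref{thm:syssampling} applies to it.

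Two further facts are then needed. First, candidates with $x_c=1$ are always selected by systematic rounding, because their interval in $[0,k)$ has length $1$ and so always contains a sampling point. Second, fully selected candidates and the fractional part of a voter's approval set together form an interval of the full order. The cleanest way to obtain the second fact is to apply lower quota to the smallest interval of $\triangleright$ containing $A_i$ restricted to the support of $x$, and to argue that any non-approved candidates inside it carry weight $0$ or $1$. I expect this to be the only delicate point, so the theorem follows by the short argument above.
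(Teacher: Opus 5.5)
Your proof is correct and is the paper's argument phrased by contradiction. The paper fixes $S$ and $T$ with $|T|\le k|S|/n$, uses fractional core stability to find $i\in S$ with $u_i(x)\ge u_i(T)$, sharpens this to $\lfloor u_i(x)\rfloor\ge u_i(T)$ because $u_i(T)$ is an integer, and concludes $u_i(W)\ge u_i(T)$ via \Cref{claim:low-bound-approval}, exactly as in your three steps.

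One small fix to your optional aside on the lemma: take the smallest interval of $\triangleright$ containing $A_i\cap\{c: 0<x_c<1\}$, not $A_i$ intersected with the whole support of $x$. Otherwise an approved candidate with $x_c=1$ can pull a non-approved fractional candidate into that interval (e.g.\ $a\triangleright b\triangleright c$, $k=2$, $x=(1,\nicefrac12,\nicefrac12)$, voters approving $\{a,c\}$ and $\{a,b\}$).
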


The guarantee in \Cref{claim:low-bound-approval} is strong enough to translate other ex-ante properties into their ex-post counterparts. As an illustration, in \Cref{app:ExactCore} we study a fractional variant of EJR+ \citep{10.1145/3580507.3597785}, which is a proportionality notion incomparable to core stability, and show that the analogue of \Cref{lem:expostcore} holds for fractional EJR+ as well.

\section{Ex-ante Approximate Core Stability} \label{sec:approxCore}

Recall that ex-ante core stability is incompatible with ex-post JR
even under single-peaked preferences
(\Cref{thm:easy-impossibility}). Previously, we relaxed the ex-post
requirement by allowing each voter to add one candidate. Here, we
instead retain exact ex-post proportionality and approximate the
ex-ante core. Following
\citet{10.24963/ijcai.2025/447} and \citet{pmlr-v139-li21j}, we use two factors:
$\alpha$ raises the utility gain required for a deviation, while
$\beta$ reduces the coalition's proportional share.

\smallskip
\noindent\textbf{$(\alpha,\beta)$-Fractional Core.}
A nonempty coalition $S\subseteq N$ \emph{fractionally
$(\alpha,\beta)$-blocks} $x$, for some $\alpha,\beta \geq 1$, if some $y\in[0,1]^C$ satisfies
$y(C)\leq k|S|/(\beta n)$ and
$u_i(y)>\alpha u_i(x)$ for every $i\in S$. A fractional committee is
\emph{$(\alpha,\beta)$-fractionally core stable} if no coalition
fractionally $(\alpha,\beta)$-blocks it.

\smallskip

While in integral committee elections, the two parameters induce rather different results, in the fractional world they are closely related as \Cref{lem:approx-core} shows. 

\begin{lemma} \label{lem:approx-core}
    Given an election $E$ with additive preferences, let $x$ be a fractional committee that, for some $\alpha,\beta \geq 1$ is $(\alpha,\beta)$-fractionally core stable. Then, for any $\gamma \in [1,\beta]$, $x$ is also $(\alpha\gamma,\nicefrac{\beta}{\gamma})$-fractionally core stable.
\end{lemma}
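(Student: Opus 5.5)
We prove the contrapositive. Suppose a nonempty coalition $S$ fractionally $(\alpha\gamma,\nicefrac{\beta}{\gamma})$-blocks $x$. We will rescale the witnessing committee to obtain a fractional $(\alpha,\beta)$-block, which contradicts the hypothesis. The only tools needed are additivity, i.e.\ linearity of $u_i(\cdot)$ in the fractional committee, and the box constraint $[0,1]^C$.

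Let $y\in[0,1]^C$ be the blocking committee. Then
\[
y(C)\le \frac{k|S|\gamma}{\beta n}
\quad\text{and}\quad
u_i(y)>\alpha\gamma\, u_i(x)\quad\text{for every } i\in S.
\]
Define $y' \coloneqq y/\gamma$. Since $\gamma\ge 1$, every entry of $y'$ lies in $[0,1]$, so $y'\in[0,1]^C$. Its size is $y'(C)=y(C)/\gamma\le k|S|/(\beta n)$, which is exactly the budget permitted for an $(\alpha,\beta)$-deviation of $S$. By linearity of utilities,
\[
u_i(y')=\frac{u_i(y)}{\gamma}>\alpha\, u_i(x)\quad\text{for every } i\in S.
\]
Hence $S$ fractionally $(\alpha,\beta)$-blocks $x$ via $y'$, a contradiction.

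We also check that the new parameters are admissible. Since $\gamma\ge1$ we have $\alpha\gamma\ge1$, and $\gamma\le\beta$ gives $\nicefrac{\beta}{\gamma}\ge1$, so $(\alpha\gamma,\nicefrac{\beta}{\gamma})$ satisfies the standing requirement $\alpha,\beta\ge1$ in the definition.

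There is no real obstacle here. The one point that needs care is that scaling must go downward. Scaling by $1/\gamma\le1$ keeps $y'$ inside $[0,1]^C$, whereas the reverse implication would require scaling up and could violate the box constraint. The case $u_i(x)=0$ needs no separate treatment, because the strict inequalities are preserved under division by $\gamma>0$.
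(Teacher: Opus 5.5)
Your proof is correct and is essentially the paper's argument. You take an $(\alpha\gamma,\beta/\gamma)$-blocking witness, scale it down by $1/\gamma$ (which keeps it in $[0,1]^C$ and brings its size within $k|S|/(\beta n)$), and use additivity to obtain an $(\alpha,\beta)$-block, a contradiction.
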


\begin{proof}
Let $x$ be fractionally $(\alpha,\beta)$-core stable and assume for contradiction that there exists a coalition $S$ that is  fractionally $(\alpha\gamma,\nicefrac{\beta}{\gamma})$-blocking $x$ using $z \in [0,1]^C$.
Define $y_c=\nicefrac{z_c}{\gamma}$ for $c \in C$. We get $y(C)=\nicefrac{z(C)}{\gamma}\leq \nicefrac{|S|k}{\beta n}$ and $u_i(y)= \frac{u_i(z)}{\gamma}>\alpha u_i(x)$ for $i \in S$. Thus, $S$ fractionally $(\alpha,\beta)$-blocks $x$, a contradiction. 
\end{proof}

Thus, for a fixed product $\alpha\beta$, relaxing utility gives the weakest requirement, while relaxing coalition size gives the strongest. We therefore use the former for our lower bound and the latter for our upper bound. We show in \Cref{app:approxCore} that the implication cannot be reversed.

We establish the tight threshold $\alpha\beta=2$: under antichain
single-peaked preferences, compatibility with ex-post PJR holds if
and only if $\alpha\beta\geq2$, while the lower bound already holds
under single-peaked preferences.

\subsection{Lower Bound Subject to Ex-Post PJR or JR}

We give a proof sketch here and a formal proof in \Cref{app:approxCore}. The statement holds for both JR and PJR.

\begin{restatable}{theorem}{thmApproxLower}\label{thm:approxLower}
Under single-peaked preferences, and for every $\alpha,\beta\geq 1$ such that $\alpha\beta<2$, no voting rule satisfies ex-ante $(\alpha,\beta)$-fractional core stability and ex-post (P)JR.
\end{restatable}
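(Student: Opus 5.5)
By \Cref{lem:approx-core} with $\gamma=\beta$, every $(\alpha,\beta)$-fractionally core-stable committee is also $(\alpha\beta,1)$-fractionally core stable. It therefore suffices to prove the statement for $\beta=1$ and every $\alpha<2$. Concretely, I will give a family of single-peaked profiles, indexed by a small $\varepsilon>0$, with the following property: every committee satisfying JR (and, in a parallel argument, every committee satisfying PJR) leaves some coalition $S$ with a low utility. Every lottery whose support consists only of such committees then implements a fractional committee $x$ for which $S$ has a deviation $y$ with $y(C)\le k|S|/n$ and $u_i(y)>\alpha u_i(x)$ for all $i\in S$, once $\varepsilon$ is small enough relative to $2-\alpha$.

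\Cref{thm:easy-impossibility} is the template, but it only achieves a ratio of about $3/2$. Two constraints hold it back. First, the shared middle candidate must have utility below $1$, since otherwise $\{c_2,\cdot\}$ would satisfy JR. Second, $y$ can spend only half a unit on each private peak. To push the ratio to $2$, I would enlarge the instance.

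\emph{Construction.} Take $k=2$ and $n=2$ groups $L$ and $R$ of equal size. Place many candidates along the line so that each group's utility rises towards its own end but stays relatively high over a long shared middle stretch. The aim is for every JR committee to be forced onto the two extreme peaks (utility $1$ for the owning group and $\approx 0$ for the other), while the shared middle block offers both groups close to $1$ per unit. A fractional $y$ of total size $2$ concentrated on the middle then gives each group utility close to $2$. The point of using several middle candidates rather than one is that the utility of an integral pair of middle candidates to a group must not reach the JR threshold $\tau$. The threshold is set by the group's single best candidate, whereas fractional spreading over the middle accumulates utility close to $2$. Single-peakedness must be checked for each group along the common order, which holds because each utility is monotone from the group's own end through the middle.

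\emph{Verification.} First I show that every JR committee, and separately every PJR committee, must contain each group's peak; as in the example, these are individual-voter or group cohesiveness arguments with $\tau(c)$ equal to the peak value. Hence every ex-post (P)JR lottery implements a fixed $x$ giving each group utility $\approx 1$. I then exhibit $y$ with $y(C)=2$ and $u_i(y)\to 2$ as $\varepsilon\to 0$.

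\emph{Main obstacle.} The hard step is reconciling single-peakedness with the requirement that integral middle committees fail (P)JR while fractional middle mass yields utility near $2$ for both groups simultaneously. The example shows that a single middle candidate caps the ratio at $3/2$. My first attempt would be to vary $n$ and $k$ jointly, for example $k$ voters each entitled to one seat, with peaks at alternating ends and shared middle candidates of utility just below the entitlement threshold. The goal is to drive the achievable deviation ratio to $2-O(1/k)$ and then let $k\to\infty$.
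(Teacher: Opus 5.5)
Your reduction to $\beta=1$ via \Cref{lem:approx-core} with $\gamma=\beta$ is correct and matches the paper. The gap is in the construction itself.

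\textbf{The $k=2$ instance cannot work.} Your justification for it is self-contradictory. For a single voter, $\max\{u_i(y): y\in[0,1]^C,\ y(C)\le 2\}$ is attained by an integral set of two candidates. So if every integral pair of middle candidates stays below a group's JR threshold, every fractional $y$ of size $2$ on the middle does too. Fractional committees never give one voter more than the best integral set of the same size; they only help to balance several voters.

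More decisively, your two requirements are incompatible when $k=2$: (a) every JR committee gives each group about $1$, and (b) some $y$ with $y(C)\le 2$ gives each group nearly $2$. Normalize $\max_c u_L(c)=\max_c u_R(c)=1$, and suppose some $y$ gives both groups more than $2-\eta$. Let $W$ be the two candidates maximizing $u_L(c)+u_R(c)$. Then $u_L(W)+u_R(W)\ge u_L(y)+u_R(y)>4-2\eta$, and each term is at most $2$. Hence both groups get more than $2-2\eta\ge 1$ from $W$, so $W$ satisfies JR (and PJR). No coalition can beat $W$ by a factor $2-\eta$: singletons have budget $1$, and the grand coalition would need more than $(2-\eta)(2-2\eta)>2$ for small $\eta$. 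If no such $y$ exists, any JR committee is already unblocked. So every two-voter instance with $k=2$ admits an ex-post JR outcome that is $(\alpha,1)$-fractionally core stable for all $\alpha>(1+\sqrt{5})/2$. Large $k$ is essential, not a refinement.

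\textbf{What the proof actually needs.} Your fallback (many voters with $n=k$, one seat each, shared candidates valued just below the peak) points in the right direction. But it is not a construction, and what it leaves unspecified is exactly the content of the paper's proof.

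The paper places candidates $p_1,q_1,p_2,q_2,\dots,q_{k-1},p_k$ on a line. Voter $i$ values a private peak $p_i$ at $1$ and the two flanking candidates $q_{i-1},q_i$ at $1-\delta$. Thus each $q_j$ is shared by exactly two neighbours. Local sharing is crucial: if middle candidates were valued near $1$ by many voters, two of them would already meet every voter's JR threshold.

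Uniqueness of the (P)JR committee follows from a counting argument.
\begin{enumerate}
\item Since $n=k$, each voter alone is $(\tau,\{p_i\})$-cohesive with $\tau(p_i)=1$.
\item The end voters therefore force $p_1$ and $p_k$.
\item Omitting a maximal run of $\ell$ interior peaks forces all $\ell+1$ flanking $q$'s into the committee, which exceeds $k$ seats.
\end{enumerate}
Hence $W^\star=\{p_1,\dots,p_k\}$ is the only (P)JR committee, and $x=\mathbf{1}_{W^\star}$.

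The factor $2$ comes from the interior coalition $S=\{2,\dots,k-1\}$. It has budget $k-2$ and spreads it as $y_{q_j}=(k-2)/(k-1)$ over all $k-1$ shared candidates. Each member values exactly two of these, so gets $2(1-\delta)(k-2)/(k-1)>\alpha$ for large $k$ and small $\delta$.

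Without specifying these valuations, the deviating coalition and its $y$, and the uniqueness argument, your proposal does not yet prove the statement.
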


\begin{proof}[Proof Sketch]

By \Cref{lem:approx-core} it suffices to show the statement for $\alpha \in [1,2)$ and $\beta=1$. Choose an integer $k >  3$ and a number $\delta \in (0, \nicefrac{1}{2})$ such that $2(1-\delta)\frac{k-2}{k-1}>\alpha$. Consider the election $E=(N,C,U,k)$ described by the following single-peaked utility profile.
\newcommand{\onecell}{\cellcolor{c1col!40}1}
\newcommand{\deltacell}{\cellcolor{c2col!40}1-\delta}
\begingroup
\setlength{\arraycolsep}{3.5pt}
\renewcommand{\arraystretch}{1.15}
\[
\begin{array}{c|cccccccc}
 & p_1 & q_1 & p_2 & q_2 & \cdots
 & p_{k-1} & q_{k-1} & p_k \\
\hline
1
 & \onecell & \deltacell & 0 & 0
 & \cdots & 0 & 0 & 0 \\
2
 & 0 & \deltacell & \onecell & \deltacell
 & \cdots & 0 & 0 & 0 \\
\vdots
 & \vdots & \vdots & \vdots & \vdots
 & \ddots & \vdots & \vdots & \vdots \\
k-1
 & 0 & 0 & 0 & 0
 & \cdots & \onecell & \deltacell & 0 \\
k
 & 0 & 0 & 0 & 0
 & \cdots & 0 & \deltacell & \onecell
\end{array}
\]
\endgroup

More precisely, $N=[k]$, and each voter $i$ assigns utility $1$ to
their peak candidate $p_i$ and utility $1-\delta$ to each existing
adjacent candidate $q_{i-1}$ and $q_i$, and $0$ otherwise. 

The unique (P)JR committee for this profile is
$W^\star=\{p_1,p_2,\dots,p_k\}$.
Consequently, any ex-post JR lottery assigns
probability $1$ to $W^\star$. Let $x$ denote the fractional committee induced by this lottery.

Consider $S=\{2,\dots,k-1\}$ and define a partial
fractional committee $y$ by setting $y_{q_j}=(k-2)/(k-1)$ for
$j\in[k-1]$ and $y_c=0$ otherwise. Thus, $y$ is a
scaled-down version of the set of $q$-candidates.
Its size is $y(C)=k-2$, which is exactly the proportional
budget of $S$. Moreover, every voter $i\in S$ obtains
$u_i(y)=2(1-\delta)(k-2)/(k-1)>\alpha=\alpha u_i(x)$. Hence, $(S,y)$
witnesses that $x$ is fractionally $(\alpha,1)$-blocked.
\end{proof}

\subsection{Matching Upper Bound with Ex-post PJR}

For our matching upper bound, we adapt the \Bestrep algorithm
of \citet{pierczynski_core-stable_2021}. They prove fractional core stability for a broad class of ordinal preferences under the lexicographic extension induced by rank classes.\footnote{A voter prefers $W$ to $T$ if $W$ contains more of their most-preferred candidates. Ties are broken by comparing the next-most-preferred candidates, and so on.} We show that \Bestrep retains meaningful
approximate guarantees under structured additive preferences. Unlike
the original algorithm, which assumes fixed voter and candidate
orders, ours requires only a candidate order and a carefully
designed tie-breaking rule.

\smallskip

\noindent\textbf{\Bestrep$(E,\triangleright)$:}
Initialize \(x_c=0\) for every candidate \(c\in C\), and give each voter \(i\in N\) a budget \(b_i=\nicefrac{k}{n}\). Process the voters in an arbitrary order. When voter \(i\) is considered, choose a candidate \(c^*\) with \(x_{c^*}<1\) by first maximizing \(u_i(c)\), then restricting to the undominated candidates among the maximizers, and finally breaking any remaining ties according to \(\triangleright\). Set \(\delta=\min\{b_i,1-x_{c^*}\}\), increase \(x_{c^*}\) by \(\delta\), and decrease \(b_i\) by \(\delta\). Repeat this step until \(b_i=0\). After all voters have been processed, return \(x\).

By breaking ties in favor of dominating candidates, it follows immediately that the fractional committees returned by \Bestrep are dominance-respecting. 

\begin{theorem}\label{thm: alphabeta_2_tight}
For any antichain single-peaked profile, witnessed by $\triangleright$, and any $\alpha,\beta\geq 1$ such that $\alpha\beta\geq 2$, \Bestrep$(E,\triangleright)$ followed by \sysrounding$(x,\triangleright)$ satisfies ex-ante $(\alpha,\beta)$-fractional core stability and ex-post PJR and JR.
\end{theorem}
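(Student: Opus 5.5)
By \Cref{lem:approx-core}, it suffices to prove ex-ante $(1,2)$-fractional core stability of the output $x$ of \Bestrep. Indeed, any $(\alpha,\beta)$ with $\alpha\beta\ge 2$ and $\alpha,\beta\ge 1$ is weaker than some pair $(\alpha',\beta')$ with $\alpha'\beta'=2$, $\alpha'\le\alpha$, $\beta'\le\beta$. Any such pair $(\alpha',\beta')$ is in turn implied by $(1,2)$ via the lemma with $\gamma=\alpha'\le 2$. The proof then splits into an ex-ante part about $x$ and an ex-post part about the output $W$ of \sysrounding$(x,\triangleright)$.

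\emph{Ex ante.} Suppose $S$ fractionally $(1,2)$-blocks $x$ with some $y$, $y(C)\le k|S|/(2n)$, and $u_i(y)>u_i(x)$ for all $i\in S$. Each voter $i$ spends budget $k/n$ on her top available candidates at the time she is processed. Let $i$ be the member of $S$ processed last, and let $v_i$ be the utility of the last candidate she funded. Every candidate $c$ with $u_i(c)>v_i$ is fully funded in $x$ (it was unavailable or filled when $i$ reached level $v_i$). Write $F=\{c: x_c=1\}$. Then
\[
u_i(y)\le \sum_{c:u_i(c)>v_i} y_c u_i(c)+v_i\, y(\{c:u_i(c)\le v_i\}),
\]
while $u_i(x)\ge \sum_{c:u_i(c)>v_i}u_i(c)+v_i\cdot(k/n)$, since $i$'s own spending is at level at least $v_i$. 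The first term of the upper bound is at most the first term of the lower bound because $y_c\le 1=x_c$.

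Bounding the second term needs care, since $y$ may put up to $k|S|/(2n)$ mass below level $v_i$, which is more than $k/n$. This is where the factor $2$ and the ordering must enter. I would argue by an exchange/charging argument: the members of $S$ processed earlier each spent $k/n$, and the single-peaked structure implies that mass $y$ places on candidates not fully funded in $x$ is useful to at most "two sides" of voters. Concretely, I would order $S$ by processing time. I would show that there is a voter for whom the $y$-mass on candidates that are not at least as good (for her) as her own funded ones is at most $2\cdot k/n$ times her own share. I would then conclude via averaging over $S$: if every $i\in S$ strictly gains, summing the required mass gives more than $k|S|/(2n)$. I expect this charging step to be the main obstacle, and I would first try to prove it on single-peaked profiles, where upper contour sets are intervals and each unfunded candidate can serve only voters whose contour sets stretch across it. Afterwards I would lift it to antichain single-peaked profiles using dominance-respect. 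The tie-breaking toward undominated candidates guarantees that fractionally chosen candidates form an antichain, so only that antichain restricted to the interval structure matters; fully funded candidates are handled by $x_c=1$.

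\emph{Ex post.} \Bestrep is dominance-respecting, so \Cref{lem:sysroundingApprox}-style lower-quota reasoning applies. Let $S$ be $(\tau,T)$-cohesive. For each $c\in T$, the members of $S$ collectively spend $|S|k/n\ge |T|$ on candidates of value at least $\min_{c\in T}$ of their levels. More precisely, I would show the following by induction along the threshold layers $H^\varphi$ of $\tau$: for each threshold $\varphi$, the set $\bigcup_{i\in S}H_i^{\varphi}$ is an interval of $\triangleright$ on the antichain (union of intersecting intervals containing the common candidates of $T$), and it receives $x$-mass at least $|\{c\in T:\tau(c)\ge\varphi\}|$, because each member spends her whole budget there before going lower. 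Lower quota from \Cref{thm:syssampling} then gives $W$ at least that many integral selections in this union, each worth at least $\varphi$ to some member of $S$. Summing layers via the cake-layer identity gives $\sum_{c\in W}\max_{i\in S}u_i(c)\ge\tau(T)$, which is PJR. JR follows analogously with $|T|=1$, where lower quota on the single interval $H_i^{\tau(c)}$ of one voter gives $u_i(W)\ge\tau(c)$.
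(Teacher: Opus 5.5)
Your reduction to $(1,2)$-fractional core stability is fine, and your ex-post plan is close to the paper's. The ex-ante part, however, has a genuine gap: the ``charging step'' you flag as the main obstacle is the whole content of the proof, and your single-voter setup does not lead to it. The processing order of $S$ plays no role. Also, your bound $u_i(x)\ge\sum_{c:u_i(c)>v_i}u_i(c)+v_i\cdot k/n$ is false in general, because it counts $i$'s spending above level $v_i$ twice. For instance, if $i$ spends $k/n-\varepsilon$ topping up a better candidate and only $\varepsilon$ at level $v_i$, only $v_i\varepsilon$ is guaranteed.

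The missing idea is to use upper contour sets for all of $S$ at once.
\begin{itemize}
\item By the cake-layer identity, $u_i(y)>u_i(x)$ gives each $i\in S$ a \emph{bad} set $H_i=H_i^{\varphi}$ with $y(H_i)>x(H_i)$.
\item Since $x(H_i)<y(H_i)\le|H_i|$, $H_i$ contains a candidate with $x_c<1$. That candidate stays available throughout \Bestrep, so everything $i$ funds lies in $H_i$.
\item Hence the set $A$ of candidates funded by $S$, which satisfies $x(A)\ge|S|k/n$, is covered by $\{H_i\}_{i\in S}$. Take an inclusion-minimal subcover $\mathcal{H}$.
\item Show that any $c$ lying in three or more sets of $\mathcal{H}$ has $x_c=1$. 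For intervals this holds because a minimal cover has depth at most two. In the antichain case, pick a private candidate $d_t\in A$ in each such set. Dominance-respect and privacy show that $c$ together with these $d_t$ is dominance-free. On this restriction the sets are intervals, each containing $c$ and only its own private candidate; two private candidates lie on the same side of $c$, which is a contradiction.
\item Removing each such $c$ from all but one set keeps every set bad, since $y_c\le1=x_c$. This gives $|S|k/n\le x(A)\le\sum_{H\in\mathcal{H}}x(H)<\sum_{H\in\mathcal{H}}y(H)\le 2y(C)$.
\end{itemize}
Your ``useful to at most two sides'' remark gestures at the depth-two property, but without the bad-set cover and the minimal-subcover argument nothing is established.

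In the ex-post part, which matches the paper's claim $|U^\varphi\cap W|\ge|I^\varphi\cap T|$, three points need repair.
\begin{itemize}
\item The mass bound needs a case split. If every $c\in I^\varphi\cap T$ has $x_c=1$, the claim is trivial, but members of $S$ may then spend below $\varphi$, so your reasoning does not cover this case. Otherwise, a common candidate with $x_c<1$ keeps all of $S$ spending inside $U^\varphi=\bigcup_{i\in S}H_i^\varphi$.
\item In the antichain case that candidate may have $x_c=0$ while $I^\varphi\cap C^f=\emptyset$. Then the sets $H_i^\varphi\cap C^f$ share no candidate, and your ``intersecting intervals'' justification does not apply. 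The paper adds $c$ to $C^f$ and checks that $C^f\cup\{c\}$ is still an antichain. Dominance-respect rules out $c$ dominating some $d\in C^f$, and $I^\varphi\cap C^f=\emptyset$ rules out $c$ being dominated.
\item Your JR argument is wrong as stated. A single voter's $H_i^{\tau(c)}$ can carry $x$-mass below $1$, e.g.\ when each member spends $k/n<1$ on her own top candidate; lower quota on it then yields nothing. Instead, apply lower quota to the union $U^{\tau(c)}$ and take the voter $j\in S$ whose contour set contains the selected candidate, giving $u_j(W)\ge\tau(c)$.
\end{itemize}
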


We divide the proof into two parts, where both parts use the cake-layer representation (see \Cref{fig:cake-comparison}). For
each voter $i\in N$ and $\varphi\geq0$, let
$H_i^\varphi=\{c\in C:u_i(c)\geq\varphi\}$ be the upper contour set w.r.t. $\varphi$. If
$\varphi_1>\cdots>\varphi_{\ell_i}>
\varphi_{\ell_i+1}=0$ are the distinct positive utility values of
$i$, then every fractional committee $z$ satisfies $u_i(z)=\sum_{j=1}^{\ell_i}
(\varphi_j-\varphi_{j+1})z(H_i^{\varphi_j}).$

\begin{proof}[Proof of $(\alpha,\beta)$-fractional core stability]
By \Cref{lem:approx-core}, it
suffices to prove $(1,2)$-fractional core stability. Let $x$ be the
outcome of \Bestrep, and suppose that a coalition $S$ and a partial
fractional committee $y$ satisfy $u_i(y)>u_i(x)$ for every $i\in S$.
We show that $y(C)>|S|k/(2n)$.

By the cake-layer representation, for every $i\in S$ there exists an
upper contour set $H_i^\varphi$ such that
$y(H_i^\varphi)>x(H_i^\varphi)$. Fix one such \emph{bad} set for each
$i\in S$ and denote it by $H_i$.

Let $A$ be the set of candidates receiving positive budget from at
least one voter in $S$. Since these voters jointly spend $|S|k/n$, we
have $x(A)\geq|S|k/n$. We first show that
$A\subseteq\bigcup_{i\in S}H_i$. Fix $c^\star\in A$ and a voter
$i\in S$ who assigns budget to $c^\star$. Since
$x(H_i)<y(H_i)\leq|H_i|$, some candidate $c\in H_i$ satisfies
$x_c<1$ and was therefore available whenever $i$ assigned budget to
$c^\star$. By the definition of \Bestrep,
$u_i(c^\star)\geq u_i(c)$, and hence $c^\star\in H_i$.

Let $\mathcal H$ be an inclusion-minimal subfamily of
$\{H_i:i\in S\}$ whose union contains $A$, and let $\mu_c$ denote the
number of sets in $\mathcal H$ containing $c$.

\begin{restatable}{claim}{claimThree}\label{claim:mu}
If $\mu_c\geq3$ for some candidate $c\in C$, then $x_c=1$.
\end{restatable}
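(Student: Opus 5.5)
The plan is to argue by contradiction. Assume $\mu_c\geq 3$ and $x_c<1$, and let $H_1,H_2,H_3\in\mathcal H$ be three sets containing $c$. Because $\mathcal H$ is inclusion-minimal among subfamilies covering $A$, each $H_j$ has a \emph{private} candidate $a_j\in A$: it lies in $H_j$ but in no other member of $\mathcal H$, since otherwise $H_j$ could be dropped. In particular $a_1,a_2,a_3$ are distinct and each differs from $c$.

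The first step is to show that $\{c,a_1,a_2,a_3\}$ is an antichain of the dominance relation. Three facts do this.

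\begin{enumerate}[label=(\alph*)]
\item \emph{$a_j$ does not dominate $c$.} If it did, then $u_i(a_j)\geq u_i(c)$ for every voter $i$. Since each $H\in\mathcal H$ is an upper contour set, $a_j$ would lie in every member of $\mathcal H$ containing $c$, including $H_{j'}$ for $j'\neq j$. This contradicts privacy.
\item \emph{$a_j$ does not dominate $a_{j'}$.} The same argument applies: $a_{j'}\in H_{j'}$ would force $a_j\in H_{j'}$.
\item \emph{$c$ does not dominate $a_j$.} Because $a_j\in A$, some voter $i\in S$ assigned budget to $a_j$. Since $x_c<1$ at the end of \Bestrep and $x$ only increases, $c$ was available at that moment. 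If $c$ dominates $a_j$, then $u_i(c)\geq u_i(a_j)$. Strict inequality contradicts the maximization step of \Bestrep. Equality puts both in the set of maximizers, and the tie-breaking rule restricts to undominated maximizers, so the dominated candidate $a_j$ would not have been chosen. Either way we get a contradiction.
\end{enumerate}

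With the antichain in hand, antichain single-peakedness says the profile restricted to $C'=\{c,a_1,a_2,a_3\}$ is single-peaked with respect to $\triangleright$ restricted to $C'$. Hence each $H_j\cap C'$ is an upper contour set of a single-peaked restricted profile, and so an interval of this four-element order. Each such interval contains $c$ and its private point $a_j$. No $a_{j'}$ with $j'\neq j$ can lie between $a_j$ and $c$, since that would put $a_{j'}$ into the interval $H_j\cap C'$, contradicting privacy of $a_{j'}$.

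By pigeonhole, two private points lie on the same side of $c$ in $\triangleright$, say $a_1\triangleright a_2\triangleright c$ after relabeling. Then $a_2$ lies between $a_1$ and $c$, so $a_2\in H_1$, a contradiction. Therefore $x_c=1$.

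I expect the main obstacle to be step (c), which is exactly where the specific tie-breaking rule of \Bestrep (maximize utility, then prefer undominated candidates) must be used. One must check that availability of $c$ throughout the run follows from $x_c<1$ at termination, and that the tie-break correctly excludes $a_j$ even though the voter who funded $a_j$ need not own one of the sets $H_1,H_2,H_3$.
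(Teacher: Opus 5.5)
Your proof is correct and follows essentially the same route as the paper's: private candidates obtained from inclusion-minimality, an antichain on $c$ together with those private candidates (from privacy plus dominance-respectingness), and the interval/pigeonhole contradiction. The only difference is cosmetic: in step (c) you re-derive the needed instance of dominance-respectingness directly from \Bestrep's tie-breaking rule, whereas the paper simply cites that property.
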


We prove the claim formally in \Cref{app:approxCore}. For
single-peaked profiles, it follows from the fact that an
inclusion-minimal family of intervals covering $A$ has multiplicity at
most two. For antichain single-peaked profiles, assuming
$x_c<1$ and $\mu_c\geq3$ allows us to extract a dominance-free
restriction in which $c$ remains covered three times by a minimal
cover. Antichain single-peakedness makes the restricted sets
intervals, contradicting the preceding observation.

Assuming the claim, we make modifications to $\mathcal{H}$: For every candidate $c$ with $\mu_c\geq3$, retain $c$ in one set of
$\mathcal H$ and remove it from all others. By the claim, $x_c=1$.
Since $y_c\leq1$, removing $c$ preserves badness:
$y(H\setminus\{c\})\geq y(H)-1>x(H)-1
=x(H\setminus\{c\})$. Thus, the modified family still covers $A$,
every set remains bad, and every candidate is covered at most twice.
Consequently,
$$
x(A)\leq\sum_{H\in\mathcal H}x(H)<\sum_{H\in\mathcal H}y(H)=\sum_{c\in C}\mu_cy_c\leq2y(C).
$$
Hence, $y(C)>x(A)/2\geq|S|k/(2n)$, proving that $x$ is
$(1,2)$-fractionally core stable.
\end{proof}

We move to the second part of the proof of \Cref{thm: alphabeta_2_tight}.

\begin{proof}[Proof of ex-post PJR]
Let $x$ be the fractional committee returned by \Bestrep, and $W$ an outcome of applying \sysrounding to $x$. Fix a
$(\tau,T)$-cohesive coalition $S$. Without loss of
generality, let $\tau(c)=\min_{i\in S}u_i(c)$ for every $c\in T$ and
define $U^\varphi=\bigcup_{i\in S}H_i^\varphi , \quad I^\varphi=\bigcap_{i\in S}H_i^\varphi$.

\begin{claim}
For every $\varphi\geq0$, $|U^\varphi\cap W|\geq|I^\varphi\cap T|$.
\end{claim}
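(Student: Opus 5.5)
The plan is to prove the claim in two steps. First we show the fractional bound $x(U^\varphi)\geq|I^\varphi\cap T|$ from the greedy structure of \Bestrep. Then we transfer it to $W$ through the lower-quota property of \sysrounding (\Cref{thm:syssampling}). Write $t\coloneqq|I^\varphi\cap T|$. If $t=0$ there is nothing to show. If $\varphi=0$, then $U^0=C$ and $|W|=k\geq|T|$ by cohesiveness. So assume $t\geq1$ and $\varphi>0$.

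\emph{Step 1 (fractional bound).} We distinguish two cases.
\begin{itemize}
\item If every $c\in I^\varphi\cap T$ ends with $x_c=1$, then $x(U^\varphi)\geq x(I^\varphi\cap T)=t$, because $I^\varphi\subseteq U^\varphi$.
\item Otherwise, fix $c\in I^\varphi\cap T$ with $x_c<1$ at termination. Then $c$ was available whenever any voter $i\in S$ spent budget. By the greedy choice, every candidate $c^\star$ funded by $i$ satisfies $u_i(c^\star)\geq u_i(c)\geq\tau(c)\geq\varphi$, so $c^\star\in H_i^\varphi\subseteq U^\varphi$. Hence all of $S$'s budget lands in $U^\varphi$, and $x(U^\varphi)\geq|S|k/n\geq|T|\geq t$.
\end{itemize}
This is the same availability argument used in the proof of $(1,2)$-fractional core stability.

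\emph{Step 2 (lower quota).} Let $F=\{c:0<x_c<1\}$ and $O=\{c:x_c=1\}$. Systematic rounding always selects all of $O$, never selects candidates with $x_c=0$, and applies lower quota to intervals of the order restricted to $F$. So it suffices to show that $F\cap U^\varphi$ is an interval of $\triangleright|_F$. Then
\[
|W\cap U^\varphi|=|O\cap U^\varphi|+|W\cap F\cap U^\varphi|\geq|O\cap U^\varphi|+\lfloor x(F\cap U^\varphi)\rfloor=\lfloor x(U^\varphi)\rfloor\geq t.
\]
Since $x$ is dominance-respecting, $F$ is an antichain, so each $H_i^\varphi\cap F$ is an interval of $\triangleright|_F$ by antichain single-peakedness. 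A union of intervals is an interval as soon as they share a common point. In the single-peaked case any $c\in I^\varphi\cap T$ is such a point, which settles that case.

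\emph{Main obstacle.} In the antichain case the common witness $c\in I^\varphi$ may lie outside $F$. I would try to enlarge the antichain: pick $c'$ maximal in the dominance order among candidates dominating or equal to $c$. Then $c'\in I^\varphi$, since dominance only raises utilities. Next consider $F'=\{c'\}\cup\{d\in F: d \text{ not comparable to } c'\}$. Any $d\in F$ comparable to $c'$ must be dominated by it, by maximality. I would argue such $d$ are harmless: $d$ is dominated by $c'$, and a fractional dominated candidate forces $x_{c'}=1$ by dominance-respectingness. The hope is that these $d$ can be handled by a direct argument showing they do not create gaps, or else absorbed by applying the restriction argument to $F'$, which is an antichain containing $c'$. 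On $F'$ every $H_i^\varphi\cap F'$ is an interval containing $c'$, so their union is an interval, and I would then need to check that the removed dominated fractional candidates do not break lower quota on $F\cap U^\varphi$.

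If this enlargement becomes too delicate, the fallback is to mimic the proof of the claim on $\mu_c$. Assume a gap candidate $g\in F\setminus U^\varphi$ lies between two pieces of $F\cap U^\varphi$. From this, extract a dominance-free set containing $g$, the two pieces, and a common point of $I^\varphi$; on that set the $H_i^\varphi$ are intervals through a common point, so their union contains $g$, a contradiction.
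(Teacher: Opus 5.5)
\textbf{Verdict: there is a genuine gap, in the antichain single-peaked case.} Step~1 and the single-peaked case are correct and match the paper. In Step~1, a candidate $c\in I^\varphi\cap T$ with $x_c<1$ stays available, so all of $S$'s budget lands in $U^\varphi$. In the single-peaked case, lower quota applies to the interval $U^\varphi$.

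The antichain case is the real content of the claim. You identify the obstacle there but do not resolve it. Your main route replaces $c$ by a dominance-maximal $c'$ above it, and this discards the one property that makes the argument work, namely $x_c<1$. The new witness $c'$ may have $x_{c'}=1$ and dominate fractional candidates. Those candidates are then excluded from $F'$, and nothing in your proposal controls whether they sit in gaps of $U^\varphi\cap F$. Your sentence ``the hope is that these $d$ can be handled'' is exactly the unproven step. The fallback has the right shape, but it omits the essential point: which common point of $I^\varphi$ to add, and why it is incomparable to the candidates of $F$.

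\textbf{The missing idea.} Keep the specific witness $c\in I^\varphi\cap T$ with $x_c<1$ from Step~1, and split on whether $I^\varphi\cap F$ is empty.
\begin{itemize}
\item If $I^\varphi\cap F\neq\emptyset$, the sets $H_i^\varphi\cap F$ ($i\in S$) are intervals of $\triangleright|_F$ with a common point. Their union $U^\varphi\cap F$ is therefore an interval.
\item If $I^\varphi\cap F=\emptyset$, then $c\notin F$, so $x_c=0$. In this case $D=F\cup\{c\}$ is an antichain, for two reasons. First, $c$ cannot dominate any $d\in F$: since $x_d>0$, dominance-respectingness would force $x_c=1$. Second, no $d\in F$ can dominate $c$: otherwise $u_i(d)\geq u_i(c)\geq\varphi$ for all $i\in S$, i.e.\ $d\in I^\varphi\cap F$.
\end{itemize}
In the second case, antichain single-peakedness makes every $H_i^\varphi\cap D$ an interval of $\triangleright|_D$ containing $c$. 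Hence $U^\varphi\cap D$ is an interval, and so is its restriction $U^\varphi\cap F$ to $\triangleright|_F$. Alternatively, apply lower quota on $D$ directly: $c$ carries zero mass and is never selected. This is precisely how the paper closes the argument.

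\textbf{A smaller point.} Do not route the first bullet of Step~1 through your reduction ``it suffices that $F\cap U^\varphi$ is an interval''. In that bullet every candidate of $I^\varphi\cap T$ has $x=1$, so no witness with $x<1$ exists, and $F\cap U^\varphi$ need not be an interval. That case is immediate anyway: all of $I^\varphi\cap T\subseteq U^\varphi$ is selected with probability one.
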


The claim implies JR when $|T|=1$. For PJR, relabel the candidates in $T$ such that $\tau(c_1)\geq\cdots\geq\tau(c_{|T|})$. Applying the claim at
$\varphi=\tau(c_j)$ shows that the $j$-th highest value of
$\max_{i\in S}u_i(c)$ among candidates in $W$ is at least
$\tau(c_j)$. Summing over $j$ yields
$\sum_{c\in W}\max_{i\in S}u_i(c)\geq\tau(T)$.

\smallskip
It remains to prove the claim. Fix $\varphi\geq0$. If
$I^\varphi\cap T=\emptyset$, the claim is immediate. The same holds if
$x_c=1$ for every $c\in I^\varphi\cap T$, since all these candidates
belong to $W$. Otherwise, choose $c\in I^\varphi\cap T$ with $x_c<1$.
This candidate remains available throughout \Bestrep, so every voter
$i\in S$ assigns their budget only to candidates they weakly prefer to
$c$. Since $u_i(c)\geq\varphi$, all voters in $S$ spend their budgets
within $U^\varphi$. Thus, $x(U^\varphi)\geq |S|k/n\geq|T|$, and since the right-hand side is integer also $\lfloor x(U^\varphi) \rfloor \geq |T|$.

First suppose that $E$ is single-peaked. The sets
$H_i^\varphi$, $i\in S$, are intervals with a common intersection, so
$U^\varphi$ is also an interval of $\triangleright$. Lower quota then
gives $|U^\varphi\cap W|\geq\lfloor x(U^\varphi)\rfloor \geq |T| \geq |I^\varphi\cap T|$,  proving the claim.

Now suppose that $E$ is antichain single-peaked. Let
$C^1=\{c\in C:x_c=1\}$ and
$C^f=\{c\in C:0<x_c<1\}$. Since $C^1\subseteq W$ and
$x(U^\varphi)=|U^\varphi\cap C^1|+x(U^\varphi\cap C^f)$, the preceding
bound implies
$|U^\varphi\cap C^1|+\lfloor x(U^\varphi\cap C^f)\rfloor\geq|T|$.
Moreover, dominance-respectiveness implies that $C^f$ is an antichain
in the dominance relation.

If $I^\varphi\cap C^f\neq\emptyset$, set $D=C^f$. Then the sets
$H_i^\varphi\cap D$ are intervals with a common intersection, so
$U^\varphi\cap D$ is an interval of the restricted order.

Otherwise, the candidate $c\in I^\varphi\cap T$ chosen above satisfies
$x_c=0$. Set $D=C^f\cup\{c\}$. Then $D$ is an antichain: if $c$
dominated some $d\in C^f$, dominance-respectiveness would imply
$x_c=1$; if some $d\in C^f$ dominated $c$, then
$u_i(d)\geq u_i(c)\geq\varphi$ for all $i\in S$, contradicting
$I^\varphi\cap C^f=\emptyset$. Thus, the sets
$H_i^\varphi\cap D$ are intervals containing $c$, and hence
$U^\varphi\cap D$ is an interval. Moreover,
$x(U^\varphi\cap D)=x(U^\varphi\cap C^f)$ since $x_c=0$.

In either case, we claim that $W\cap D$ is an outcome of systematic rounding when restricting $(x,\triangleright)$ to $D$. Indeed, candidates
outside $D$ have $x$-value $0$ or $1$; removing them either has no
effect or shifts subsequent intervals by an integer, which preserves
the selection. Lower
quota on $U^\varphi\cap D$, together with the candidates in $C^1$,
therefore gives
$|U^\varphi\cap W|\geq|U^\varphi\cap C^1|
+\lfloor x(U^\varphi\cap C^f)\rfloor\geq|T|
\geq|I^\varphi\cap T|$, completing the proof.
\end{proof}

Finally, \Cref{ex:bestrep_no_ejr} (in \Cref{app:approxCore}) shows that \Bestrep followed by \sysrounding may violate ex-post EJR. Whether ex-ante $(1,2)$-core
stability can be combined with ex-post EJR remains open.

\section{Conclusion}

Our results show that decomposing fractionally core-stable solutions
offers a viable route to best-of-both-worlds proportionality under
antichain single-peaked preferences. Exact ex-ante core stability costs
at most one candidate ex post, with no loss under approval preferences,
and our impossibility result makes this guarantee essentially tight.
When exact ex-post (P)JR is retained instead, the tight ex-ante
approximation threshold is $\alpha\beta=2$.

Beyond these guarantees, our results highlight the roles of preference
structure and correlated rounding. Antichain single-peakedness may
provide a useful domain for extending possibility results beyond
single-peakedness, while our characterization identifies
systematic rounding as the canonical order-aware but
profile-oblivious scheme. Applying it in  other settings and exploring sampling schemes tailored to richer structures are promising directions for the future.

\paragraph{Acknowledgements.} Mark de Berg is supported by the Dutch Research Council (NWO) through the Gravitation-grant NETWORKS-024.002.003. Ulrike Schmidt-Kraepelin is supported by the Dutch Research Council (NWO) under project number VI.Veni.232.254.

\paragraph{Use of Generative AI.}
ChatGPT 5.6 was used to assist with literature research, language editing, and
additional error checking. It also provided a key idea for the proof of the single-peaked special case of the ex-ante part of \Cref{thm: alphabeta_2_tight}. The manuscript, including all proofs, was written by the authors, who take full responsibility for it.

\bibliographystyle{abbrvnat}
\bibliography{ref}

\appendix

\crefalias{section}{appendix}

\cleardoublepage

\section{Missing Material of \texorpdfstring{\Cref{sec:prelim}}{Section~\ref{sec:prelim}}} \label{app:PJR_notimply_JR}
    In this section, we provide an example that shows that, perhaps surprisingly, PJR does not imply JR in committee voting with additive non-binary nonnegative utility function. 
\begin{example}
    Let $E=(N,C,U,k=2)$ be the election described by the following utility profile:
    \[
    \begin{array}{c|ccc}
      & c_1 & c_2 & c_3 \\
    \hline
    1 & 1 & 0 & 2\\
    2 & 0 & 1 & 2\\
    3 & 1 & 1 & 0\\
    \end{array}
    \]

    Consider $W=\{c_1,c_2\}$. Then $W$ does not satisfy JR.
    The coalition $S=\{1,2\}$ is $(\tau,\{c_3\})$-cohesive for
    $\tau(c_3)=2$. Yet $u_1(W)=u_2(W)=1$.

    On the other hand, $W$ does satisfy PJR. To see this, consider any
    $(\tau,T)$-cohesive coalition $S$ with $T\neq\emptyset$.
    \begin{itemize}
        \item If $|S|=1$, then $S$ cannot be cohesive for any $T$, since
            $k\frac{|S|}{n}=\frac{2}{3}<1.$

        \item If $|S|=2$, then $|T|\leq 1$, so $\tau(T)\leq 2$.
        Every pair of voters satisfies $ \sum_{d\in W}\max_{i\in S}u_i(d) =2\geq\tau(T).$ \item If $|S|=3$, then every candidate gives utility $0$ to some voter in $S$, forcing $\tau(T)=0$.
    \end{itemize}
\end{example}
\section[Missing Proofs of Section~\ref{sec:DecomposingFractionalCommittee}]{Missing Proofs of \Cref{sec:DecomposingFractionalCommittee}}\label{app:DecomposingFractionalCommittee}

\thmsyssampling*

\begin{figure}[t]
\centering
\begin{tikzpicture}[
    x=0.9cm,
    y=0.9cm,
    every node/.style={font=\normalsize},
    brace/.style={decorate,decoration={brace,amplitude=4pt}},
    lab/.style={font=\normalsize}
]

\node[font=\bfseries] at (5,2.3) {Case 1: $s < t$};

\draw[thick] (0,0.6) rectangle (10,1);

\fill[c2col!40] (3.6,0.62) rectangle (6.7,0.98);

\draw[dash pattern=on 2pt off 1pt] (0.7,0.6) -- (0.7,1);
\draw[dash pattern=on 2pt off 1pt] (1.5,0.6) -- (1.5,1);
\draw[dash pattern=on 2pt off 1pt] (3.1,0.6) -- (3.1,1);
\draw[dash pattern=on 2pt off 1pt] (4.4,0.6) -- (4.4,1);
\draw[dash pattern=on 2pt off 1pt] (6,0.6) -- (6,1);
\draw[dash pattern=on 2pt off 1pt] (9.4,0.6) -- (9.4,1);
\draw[dash pattern=on 2pt off 1pt] (10,0.6) -- (10,1);

\draw[thick] (3.6,0.6) -- (3.6,1.0);
\draw[thick] (6.7,0.6) -- (6.7,1.0);

\node at (0.35, 0.8) {$c_1$};
\node at (1.1, 0.8) {$c_2$};
\node at (2.325, 0.8) {$\dots$};
\node at (3.35, 0.8) {$c_s$};
\node at (4, 0.8) {$c_{s+1}$};
\node at (5.2, 0.8) {$\dots$};
\node at (6.35, 0.8) {$c_t$};
\node at (8.05, 0.8) {$\dots$};
\node at (9.7, 0.8) {$c_m$};

\draw[
    decoration={brace,mirror, raise = 0cm}, decorate ] (0,0.45) -- (3.6,0.45)
    node[midway,below=2pt,lab, text=black] {$X_s=x_{c_1}+\dots+x_{c_s}=3.6$};
\draw [brace] (0,1.15) -- (6.7,1.15)
    node[midway,above=2pt,lab, text=black] {$X_t=x_{c_1}+\dots+x_{c_t}=6.7$};

\node[font=\bfseries] at (5,-1.2) {Case 2: $s>t$};

\draw[thick] (0,-3.0) rectangle (10,-2.6);

\fill[c2col!40] (3.7,-2.98) rectangle (7.6,-2.62);

\draw[dash pattern=on 2pt off 1pt] (0.4,-3.0) -- (0.4,-2.6);
\draw[dash pattern=on 2pt off 1pt] (1,-3.0) -- (1,-2.6);
\draw[dash pattern=on 2pt off 1pt] (3.1,-3.0) -- (3.1,-2.6);
\draw[dash pattern=on 2pt off 1pt] (3.7,-3.0) -- (3.7,-2.6);
\draw[dash pattern=on 2pt off 1pt] (4.5,-3.0) -- (4.5,-2.6);
\draw[dash pattern=on 2pt off 1pt] (7,-3.0) -- (7,-2.6);
\draw[dash pattern=on 2pt off 1pt] (9.2,-3.0) -- (9.2,-2.6);
\draw[dash pattern=on 2pt off 1pt] (10,-3.0) -- (10,-2.6);

\draw[thick] (3.7,-3.0) -- (3.7,-2.6);
\draw[thick] (7.6,-3.0) -- (7.6,-2.6);

\node at (0.2,-2.8) {$c_1$};
\node at (0.7,-2.8) {$c_2$};
\node at (2.05, -2.8) {$\dots$};
\node at (3.4,-2.8) {$c_t$};
\node at (4.1,-2.8) {$c_{t+1}$};
\node at (5.75,-2.8) {$\dots$};
\node at (7.3,-2.8) {$c_s$};
\node at (8.4,-2.8) {$\dots$};
\node at (9.6, -2.8) {$c_m$};

\draw[
    decoration={brace,mirror, raise = 0cm}, decorate ] (0,-3.15) -- (3.7,-3.15)
    node[midway,below=2pt,lab, text=black] {$X_t=x_{c_1}+\dots+x_{c_t}=3.7$};
\draw [brace] (0,-2.45) -- (7.6,-2.45)
    node[midway,above=2pt,lab, text=black] {$X_s=x_{c_1}+\dots+x_{c_s}=7.6$};

\end{tikzpicture}
\caption{
Visualization of monotonicity of rounding prefixes for two examples with the values of $X_s$ and $X_t$ as shown in the figure. In case $1$ where $s < t$, rounding $X_s$ up means we select $4$ of the first $s$ candidates. Then lower quota on $\{c_{s+1},\dots,c_t\}$ implies we select at least $\lfloor 6.7-3.6\rfloor=3$ candidates. In total we select at least $7$ candidates from the first $t$ candidates. By upper quota on  $\{c_1,\dots,c_t\}$  we select exactly $7=\lceil X_t \rceil$ and thus $X_t$ is rounded up. In case $2$ where $s>t$, rounding $X_s$ up means we select $8$ candidates from $\{c_1,c_2,\dots,c_s\}$. Upper quota on $\{c_{t+1},\dots,c_s\}$ implies we select at most $4=\lceil 7.6-3.7\rceil$ candidates from $\{c_{t+1},\dots,c_s\}$. Therefore at least $4$ selected candidates are contained in $\{c_1,c_2,\dots,c_t\}$. By upper quota on $\{c_1,c_2,\dots,c_t\}$ exactly $4=\lceil X_t\rceil$ are selected, and thus $X_t$ is rounded up.}
\label{fig:visualization_of_quota}
\end{figure}

\begin{proof}
    We first argue that properties (ii) and (iii) are equivalent. To this end, let $\mathcal{L}$ be a lottery satisfying property (ii) and assume for contradiction that some committee $W$ in the support of $\mathcal{L}$ violates lower quota for some sequence on the cycle $\triangleright^{\circ}$. Since (ii) holds, this sequence cannot be a sequence in $\triangleright$. Hence, the corresponding set is of the form $T = \{c_i, \dots, c_m, c_1, \dots, c_j\}$. Consider the complement $\bar{T}=(c_{j+1}, \dots, c_{i-1})$. Since \[|W \cap \bar{T}| = k - |W \cap T|> k - \lfloor x(T)\rfloor \geq \lceil x(\bar{T}) \rceil,\] it follows that $\bar{T}$ violates upper quota, which is a contradiction to (ii). The other direction follows analogously.

    Next, we show that that properties (i) and (ii) are equivalent. Let $\mathcal{L}$ be any lottery implementing $x$ and satisfying (ii). We fix the order of candidates $c_1 \triangleright c_2 \triangleright \dots \triangleright c_m$ and define $X_t := \sum_{j=1}^t x_{c_j}$, i.e., $X_t$ is the mass assigned by $x$ to the first $t$ candidates.

    We first observe that any committee $W$ is completely determined by the sequence $(|W \cap \{c_1, \dots, c_t\}|)_{t \in [m]}$. Namely, a candidate $c_t$ is included in $W$ if and only if the monotone sequence increases at the $t^{\text{th}}$ entry. Moreover, since $W$ satisfies lower and upper quota for every interval, we know that for all $t\in{[m]}$ it holds $$\lfloor X_t \rfloor \leq |W \cap \{c_1, \dots, c_t\}| \leq \lceil X_t \rceil.$$ Whenever $X_t$ is not an integer and equality holds on the right-hand side, we say that $X_t$ was \emph{rounded up}. 
    \begin{claim}\label{claim:quota}
        Let $s,t \in [m]$ such that $0 < X_s - \lfloor X_s \rfloor \leq X_t - \lfloor X_t \rfloor<1$. If $X_s$ is rounded up, $X_t$ is rounded up. 
    \end{claim}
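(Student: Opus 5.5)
The plan is to follow the two cases illustrated in \Cref{fig:visualization_of_quota}, using only lower and upper quota on suitably chosen intervals of $\triangleright$. Write $W_t \coloneqq |W\cap\{c_1,\dots,c_t\}|$, and let $f_s = X_s-\lfloor X_s\rfloor$ and $f_t = X_t-\lfloor X_t\rfloor$ denote the fractional parts. The hypothesis is $0<f_s\le f_t<1$ and $W_s=\lceil X_s\rceil=\lfloor X_s\rfloor+1$. The goal is $W_t=\lceil X_t\rceil$. Upper quota on the prefix $\{c_1,\dots,c_t\}$ already gives $W_t\le \lceil X_t\rceil$, so it suffices to show $W_t\ge \lfloor X_t\rfloor+1$. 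The case $s=t$ is trivial.

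\emph{Case $s<t$.} Lower quota on the interval $\{c_{s+1},\dots,c_t\}$ gives $W_t-W_s\ge \lfloor X_t-X_s\rfloor$. Since $f_t\ge f_s$, we have $X_t-X_s=\lfloor X_t\rfloor-\lfloor X_s\rfloor+(f_t-f_s)$ with $0\le f_t-f_s<1$. Hence $\lfloor X_t-X_s\rfloor=\lfloor X_t\rfloor-\lfloor X_s\rfloor$. Adding the two bounds yields
\[
W_t\ \ge\ \lfloor X_s\rfloor+1+\lfloor X_t\rfloor-\lfloor X_s\rfloor\ =\ \lfloor X_t\rfloor+1,
\]
as required.

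\emph{Case $s>t$.} Upper quota on the interval $\{c_{t+1},\dots,c_s\}$ gives $W_s-W_t\le\lceil X_s-X_t\rceil$. Now $X_s-X_t=\lfloor X_s\rfloor-\lfloor X_t\rfloor+(f_s-f_t)$ with $-1<f_s-f_t\le 0$. So $\lceil X_s-X_t\rceil=\lfloor X_s\rfloor-\lfloor X_t\rfloor$; this holds both when $f_s=f_t$, where the difference is an integer, and when $f_s<f_t$. Therefore
\[
W_t\ \ge\ W_s-\lfloor X_s\rfloor+\lfloor X_t\rfloor\ =\ \lfloor X_t\rfloor+1,
\]
again as required.

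The only step needing care is the floor/ceiling arithmetic on the difference of prefix sums, in particular the boundary case $f_s=f_t$ when $s>t$. There the interval $\{c_{t+1},\dots,c_s\}$ has integral mass, and upper quota pins its count exactly. Otherwise the argument is direct, since quotas on arbitrary intervals of $\triangleright$ are available by (ii). (This claim will then be used to show that the set of rounded-up prefixes is an upper set in fractional part, so that committees are determined by a threshold $\theta$, matching systematic rounding.)
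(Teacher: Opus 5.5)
Your proof is correct and follows the paper's argument: for $s<t$ you apply lower quota to $\{c_{s+1},\dots,c_t\}$, for $s>t$ you apply upper quota to $\{c_{t+1},\dots,c_s\}$, and in both cases upper quota on the prefix $\{c_1,\dots,c_t\}$ gives equality. Your exact computation of $\lfloor X_t-X_s\rfloor$ and $\lceil X_s-X_t\rceil$ via fractional parts is a sharper version of the paper's inequalities $X_t-X_s\geq\lfloor X_t\rfloor-\lfloor X_s\rfloor$ and $X_s-X_t\leq\lfloor X_s\rfloor-\lfloor X_t\rfloor$.
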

    To prove the claim, consider $s,t \in [m]$ satisfying the assumption. We consider two cases.

    First, assume that $s < t$. Then, $\sum_{j=s+1}^t x_{c_j} = X_t - X_s \geq \lfloor X_t \rfloor - \lfloor X_s \rfloor$, and therefore lower quota for the interval $\{c_{s+1}, \dots, c_t\}$ implies that $|W \cap \{c_{s+1}, \dots, c_t\}| \geq \lfloor X_t \rfloor - \lfloor X_s \rfloor$. Hence, $$|W \cap \{c_1,\dots,c_t\}| = |W \cap \{c_1, \dots, c_s\}| + |W \cap\{c_{s+1}, \dots, c_t\}| \geq \lceil X_s \rceil + \lfloor X_t \rfloor - \lfloor X_s \rfloor = \lceil X_t \rceil.$$

    Second, assume that $s > t$. Then, $\sum_{j=t+1}^s x_{c_j} = X_s - X_t \leq \lfloor X_s \rfloor - \lfloor X_t \rfloor$, and therefore upper quota for the interval $\{c_{t+1}, \dots, c_s\}$ implies that $|W \cap \{c_{t+1}, \dots, c_s\}| \leq \lfloor X_s \rfloor - \lfloor X_t \rfloor$. Hence, $$|W \cap \{c_1,\dots,c_t\}| = |W \cap \{c_1, \dots, c_s\}| - |W \cap \{c_{t+1}, \dots, c_s\}| \geq \lceil X_s \rceil - \left(\lfloor X_s \rfloor - \lfloor X_t\rfloor \right)  = \lceil X_t \rceil.$$
    
    By upper quota we get that the inequality in the second case must hold as equality. Figure~\ref{fig:visualization_of_quota} gives a visualization of the proof for the two different cases.
    
     Having established the claim, it becomes clear that any $W$ satisfying the lower and upper quota needs to be induced by prefixes that satisfy the following structure: There exists some $\theta \in [0,1)$ such that for any $t \in [m]$, $X_t$ is rounded up if and only if $X_t - \lfloor X_t\rfloor > \theta$. Thus, any committee in the support of $\mathcal{L}$ is one of the outcomes produced by \sysroundinginput. It remains to show that $\mathcal{L}$ assigns the same probability to these outcomes. Let $0 < q_1 < q_2 < \dots < q_r < 1$ be the distinct positive residuals of prefixes i.e., $q_i = X_s- \lfloor X_s \rfloor >0$ for some $i \in [r] \text{ and } s \in [m]$ and set $q_0=0$ and $q_{r+1}=1$. It follows from the discussion above that there are at most $r+1$ possible committees. For $j \in [r]$ let $W^j$ be the committee in which precisely the prefixes whose residual is at least $q_j$ are rounded up and $W^{r+1}$ be the committee in which every non-integral prefix is rounded down. Then for any prefix $X_t, t\in [m]$ with residual equal to $q_j, j \in [r]$, $X_t$ is rounded up exactly for the outcomes $W^1,W^2,\dots,W^j$. Hence, writing $p_j := \Pr_{\mathcal{L}}[W=W^j]$, we get: \(p_1+p_2+\dots+p_j=q_j \text{ for every } j \in [r+1]\). Consequently, $p_1=q_1$, $p_j=q_j-q_{j-1} \text{ for every }2\leq j \leq r$ and $p_{r+1}=1-q_r$. These are exactly the probabilities induced by \sysroundinginput: its uniformly random threshold lies in $[q_{j-1},q_j)$ with probability $(q_j-q_{j-1})$, and in $[q_r,1)$ with probability $(1-q_r)$. Since the prefix counts uniquely determine the committee, the lottery $\mathcal L$ and \sysroundinginput induce the same lottery. This proves $(ii)\Leftrightarrow(i)$. 
\end{proof}

\section[Missing Material of Section~\ref{sec:antichain}]{Missing Material of \Cref{sec:antichain}} \label{app:antichain}

\begin{proposition}
Every risk-averse single-peaked election is antichain single-peaked.
\end{proposition}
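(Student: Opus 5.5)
The plan is to take as witnessing order $\triangleright$ the order of candidates by left endpoint: $c\triangleright d$ whenever $\ell_c<\ell_d$, with ties broken arbitrarily (say by right endpoint). Then I would show that for every antichain $C'$ and every triple $a\triangleright b\triangleright c$ in $C'$, each voter satisfies $u_i(b)\geq\min\{u_i(a),u_i(c)\}$. This is exactly single-peakedness of the restricted profile with respect to the restricted order.

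\emph{Step 1: a covering fact.} For a quasiconcave $f_i$, every point $p\in[\ell_a,r_c]$ satisfies $f_i(p)\geq\min\{u_i(a),u_i(c)\}$. There are three cases for $p$:
\begin{itemize}
\item If $p\in I_a$, then $f_i(p)\geq\min_{I_a}f_i=u_i(a)$.
\item If $p\in I_c$, the same argument gives $f_i(p)\geq u_i(c)$.
\item Otherwise $p$ lies between a point of $I_a$ and a point of $I_c$, for instance between $r_a$ and $\ell_c$. Quasiconcavity then gives $f_i(p)\geq\min\{f_i(r_a),f_i(\ell_c)\}\geq\min\{u_i(a),u_i(c)\}$.
\end{itemize}
Consequently, whenever $I_b\subseteq[\ell_a,r_c]$, we get $u_i(b)=\min_{I_b}f_i\geq\min\{u_i(a),u_i(c)\}$ for every voter $i$.

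\emph{Step 2: handling nesting.} By the choice of order, $\ell_a\leq\ell_b\leq\ell_c$, so the only way Step 1 can fail is $r_b>r_c$. In that case $I_c\subseteq I_b$, because $\ell_b\leq\ell_c$ and $r_c<r_b$. Since $\min$ over a larger set is smaller, $u_i(b)\leq u_i(c)$ for every voter $i$. As $b,c\in C'$ form no dominating pair, $c$ does not dominate $b$, so no voter has a strict inequality. Hence $u_i(b)=u_i(c)$ for all $i$, and the single-peaked inequality holds trivially.

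The only subtlety, and the step I expect to need care, is Step 2. The antichain condition does not forbid nested intervals outright; it forbids strict utility differences between nested candidates. So the argument must use the definition of dominance (weak for all voters, strict for some) rather than interval non-nesting. With that observation, the remaining verification reduces to the elementary quasiconcavity argument of Step 1.
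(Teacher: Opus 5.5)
Your proof is correct and follows the same route as the paper. You order candidates by left endpoint and split on whether $r_b>r_c$, in which case $I_c\subseteq I_b$ and the antichain condition forces $u_i(b)=u_i(c)$, or $I_b\subseteq[\ell_a,r_c]$, in which case quasiconcavity gives $u_i(b)\geq\min\{u_i(a),u_i(c)\}$; the only difference is that you argue pointwise, whereas the paper uses convexity of the upper contour set $\{p: f_i(p)\geq\varphi\}$ with $\varphi=\min\{u_i(a),u_i(c)\}$.
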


\begin{proof}
    Let $E=(N,C,U,k)$ be a risk-averse single-peaked election, witnessed by closed intervals $I_c=[\ell_c,r_c]$ for $c \in C$ and quasiconcave functions $f_i : \mathbb{R} \to \mathbb{R}_{\geq 0}$  with $u_i(c)=\min_{x \in I_c} f_i(x)$.
    Fix a linear order $\triangleright$ on $C$ such that $c \triangleright d \Rightarrow \ell_c \leq \ell_d$. Let $C' \subseteq C$ be a dominance-free subset of candidates. We will show that for any $a,b,c \in C' : a \triangleright b \triangleright c$ and any voter $i \in N$ it holds that $u_i(b) \geq  \min\{u_i(a),u_i(c)\}$.
    
    Let $S_i^\varphi=\{x \in \mathbb{R} : f_i(x) \geq \varphi \}$. First, we show that
    \begin{equation} \label{eq:risk:1}
     u_i(c) \geq \varphi \Leftrightarrow I_c \subseteq S_i^\varphi   
    \end{equation}
     Indeed, since $f_i$ is a quasiconcave function on $\mathbb{R}$, $S_i^\varphi$ is an interval for all $i \in N$ and $\varphi \geq 0$. Moreover, if $u_i(c) \geq \varphi$, then $\min_{x \in I_c}f_i(x) \geq \varphi$, so $I_c \subseteq S_i^\varphi$. The converse is immediate.

     Second, we show 
     \begin{equation} \label{eq:risk:2}
         I_c \subseteq I_d \Rightarrow u_i(c)=u_i(d) \text{ for all }i \in N, c,d \in C' 
     \end{equation} Indeed, since $I_c \subseteq I_d$ then $\min_{x \in I_c}f_i(x) \geq \min_{x \in I_d} f_i(x)$, thus $u_i(c) \geq u_i(d)$ for all $i \in N$. If the inequality was strict for any voter, then $c$ would dominate $d$, contradicting the fact that $C'$ is dominance-free.

     We are now ready to prove what we promised by distinguishing the following two cases: \begin{itemize}
         \item $r_b > r_c$: By the choice of $\triangleright$, we know that $\ell_b \leq \ell_c$, and therefore $I_c=[\ell_c,r_c] \subseteq [\ell_b,r_b]=I_b$. Then by \Cref{eq:risk:2} $u_i(b)=u_i(c)\geq\min\{u_i(a),u_i(c)\}$ for all voters $i \in N$.
         \item $r_b \leq r_c$. Again, by the choice of $\triangleright$, we have that $\ell_a \leq \ell_b$. Thus $I_b=[\ell_b,r_b] \subseteq [\ell_a,r_c]$. Let $\varphi=\min\{u_i(a),u_i(c)\}$. Then applying \Cref{eq:risk:1} to candidates $a$ and $c$ we get that $I_a,I_c \subseteq S_i^\varphi$. Since $S_i^\varphi$ is an interval (as argued above) this implies that $[\ell_a,r_c] \subseteq S_i^\varphi$. Thus $I_b \subseteq S_i^\varphi$, and $u_i(b) \geq \varphi$.
     \end{itemize}

    In both cases we proved that $u_i(b) \geq \min\{u_i(a),u_i(c)\}$, so the restriction of $E$ to a dominance-free subset of candidates $C'$ is single-peaked with respect to $\triangleright$. Since $\triangleright$ is common for all dominance-free subsets of $C$ and $C'$ was chosen arbitrarily, $E$ is an antichain single-peaked election.
     
\end{proof}

Additionally, we show that single-crossing preferences are not a subdomain of antichain single-peaked preferences. To this end, we first provide the definition of single-crossing preferences. 

\paragraph{Single-crossing preferences:} An election $E=(N,C,U,k)$ is single-crossing if there exists a linear order $\triangleright$ over $N$ such that, for every pair of distinct candidates $c,d \in C$ both sets of voters $N_{c,d}=\{i \in N : u_i(c) \geq u_i(d)\}$ and  $N_{d,c}=\{i \in N : u_i(d) \geq u_i(c)\}$ form intervals with respect to $\triangleright$.

We now provide an example showing that the class of single-crossing preferences is not included in the class of antichain single-peaked preferences. This, combined with the known fact that the class of single-peaked preferences (which is strictly contained within antichain single-peaked) is not contained within the class of single-crossing preferences shows that the two classes are incomparable.

\begin{example}
    Consider the election $E=(N,C,U,k)$ for some $k \in [3]$ described by the utility profile:

    \[
\begin{array}{c|ccc}
      & c_1 & c_2 & c_3 \\
\hline
1 & 3 & 2 & 1\\
2 & 3 & 1 & 2  \\
3 & 1 & 2 & 3 \\
\end{array}
\]

We can easily verify that $U$ is single-crossing with respect to $1 \triangleright2 \triangleright3$. Indeed, it suffices to see that the only non-contiguous set of voters $\{1,3\}$ never occurs as $N_{c,d}=\{i \in N : u_i(c) \geq u_i(d)\}$ for any $c,d \in C$ such that $c \neq d$. 

To prove that $E$ is not antichain single-peaked it suffices to show that $E$ is not single-peaked, since it is already dominance-free. Indeed,  for every candidate $c \in C$ there exists $i \in N$ such that $c$ is the unique least-preferred candidate for $i$, i.e., $c=\arg\min_{c' \in C} u_i(c')$. Fix any linear order $\triangleright$ over the three candidates and let $c^*$ be the candidate in the middle. Then $c^*$ is the unique least-preferred candidate of some voter, which is impossible under single-peakedness, where a voter's least-preferred candidate must lie at an end of the order. As no linear order over the candidates avoids this, $E$ is not single-peaked.
\end{example}

\section{Missing Material of Section~\ref{sec:ExactCore}}\label{app:ExactCore}

\lemSysRoundingApprox*

\begin{proof}
We first prove the statement for single-peaked elections and then
extend the argument to antichain single-peaked elections.
Fix a voter $i\in N$. For $\varphi>0$, let
$H_i^\varphi=\{c\in C:u_i(c)\geq\varphi\}$ denote the corresponding
upper contour set. If $i$ assigns zero utility to every candidate, the
claim is immediate. Otherwise, let
$\varphi_1>\varphi_2>\cdots>\varphi_q>\varphi_{q+1}=0$ be the distinct
positive utility values assigned by $i$. 

In the following we will use the cake-layer representation of the utilities $u_i(W)$ and $u_i(x)$ (see \Cref{fig:cake-comparison} for an illustration and definition). 
Under single-peakedness, each $H_i^{\varphi_\ell}$ is an interval
with respect to $\triangleright$. Thus, the lower-quota guarantee of
\sysrounding (\Cref{thm:syssampling}) gives
$|W\cap H_i^{\varphi_\ell}|
\geq\lfloor x(H_i^{\varphi_\ell})\rfloor$ for all $\ell\in[q]$.

If $u_i(W) \geq u_i(x)$, then there is nothing to be shown. Otherwise, define $L=\{\ell\in[q]:
|W\cap H_i^{\varphi_\ell}|<x(H_i^{\varphi_\ell})\}$, and note that by the cake-layer representation, $u_i(W)<u_i(x)$ implies $L \neq \emptyset$. Let
$c_i^+$ be a candidate in $C\setminus W$ maximizing $u_i(c_i^+)$.
For every $\ell\in L$, the set
$H_i^{\varphi_\ell}\setminus W$ is nonempty and in particular includes $c_i^+$. Together with lower quota this yields \[\left|(W\cup\{c_i^+\})\cap H_i^{\varphi_\ell}\right|
\begin{cases}
>x(H_i^{\varphi_\ell}), & \ell\in L,\\
\geq x(H_i^{\varphi_\ell}), & \ell\notin L.
\end{cases}
\]
Since $L\neq\emptyset$ and
$\varphi_\ell-\varphi_{\ell+1}>0$ for all $\ell\in[q]$, it follows
\[
u_i(W\cup\{c_i^+\})
\!=\!\!\sum_{\ell=1}^q(\varphi_\ell-\varphi_{\ell+1})
 \!\left|(W\!\cup\!\{c_i^+\})\!\cap H_i^{\varphi_\ell}\right|
\!>\!u_i(x).
\]
This proves the statement for single-peaked elections.

We now extend the argument to antichain single-peaked elections. Let
$C^1=\{c\in C:x_c=1\}$ and
$C^f=\{c\in C:x_c\in(0,1)\}$. We can assume $C^f \neq \emptyset$, since otherwise $W=C^1$ and the claim follows immediately. Since $x$ is dominance-respecting,
$C^f$ contains no dominating pair. Antichain single-peakedness
therefore implies that the restriction of the election to $C^f$ is
single-peaked with respect to $\triangleright^f$, where
$\triangleright^f$ denotes the restriction of $\triangleright$ to
$C^f$. Let $x^f$ denote the restriction of $x$ to $C^f$.

We claim that the fact that $W$ is an outcome of \sysroundinginput implies that $W^f=W\cap C^f$ is in the support of the lottery induced by $\sysrounding(x^f,\triangleright^f)$. To this end, note that removing ``integral'' candidates (i.e., those with $x_c \in \{0,1\}$) from the instance does not affect the probability distribution of \sysrounding (up to not selecting candidates that we previously selected with probability one). 

We can therefore apply the
single-peaked case established above. Thus, for every $i\in N$, either
$u_i(W^f)\geq u_i(x^f)$ or there exists
$c_i^+\in C^f\setminus W^f$ such that
$u_i(W^f\cup\{c_i^+\})>u_i(x^f)$. Since every candidate in $C^1$ is
selected with probability one, adding the common contribution
$u_i(W\cap C^1)$ to both sides yields either
$u_i(W)\geq u_i(x)$ or
$u_i(W\cup\{c_i^+\})>u_i(x)$, as required.
\end{proof}

\subsection{Consequences for Approval Preferences}

\lemUpToOneApproval*
\begin{proof}
Fix a voter $i\in N$. If $u_i(W)\geq u_i(x)$, then $u_i(W)\geq\lfloor u_i(x)\rfloor$ immediately.

Otherwise, \Cref{lem:sysroundingApprox} guarantees the existence of a candidate $c_i^+\in C\setminus W$ such that $u_i(W\cup\{c_i^+\})>u_i(x)$. Under approval preferences, adding one candidate increases a voter's utility by at most one. Hence, $u_i(W)+1>u_i(x)$. Since $u_i(W)$ is an integer, this implies $u_i(W)\geq\lfloor u_i(x)\rfloor$.
\end{proof}

\thmExPostCore*
\begin{proof}
    Let $W \sim \sysrounding{(x, \triangleright)}$, and fix a nonempty coalition $S \subseteq N$ and a (partial) committee $T \subseteq C$ with $|T| \leq k \frac{|S|}{n}$. Since $x$ is fractionally core stable, some voter $i \in S$ satisfies $u_i(x) \geq u_i(T)$. Since $u_i(T)$ is an integer this implies that $\lfloor u_i(x) \rfloor \geq u_i(T)$. By \Cref{claim:low-bound-approval}, $u_i(W) \geq \lfloor u_i(x) \rfloor \geq u_i(T)$ and thus $W$ is core-stable.
\end{proof}

One may wonder whether the dominance-respecting requirement can be dropped, that is, whether fractional core stability of $x$ alone already suffices for \sysrounding to produce core-stable committees. It cannot. Since \sysrounding is profile-oblivious, it receives the witnessing order $\triangleright$ as part of its input rather than choosing it, so the guarantee must hold for every order witnessing antichain single-peakedness. The following example exhibits an approval-based, antichain single-peaked election, a fractionally core-stable committee $x$ that is not dominance-respecting, and a witnessing order under which \sysrounding returns a committee outside the core.

\begin{example}\label{ex:core_not_enough}
Let $E$ be the approval-based antichain single-peaked election w.r.t.  $c_1 \triangleright c_2 \triangleright c_3 \triangleright c_4$ described by the following utility profile and assume we want to select $k=2$ candidates:

\[
\begin{array}{c|cccc}
      & \text{$c_1$} & \text{$c_2$} & \text{$c_3$} & \text{$c_4$} \\
\hline
\text{$1$} & 1 & 0 & 0 & 0 \\
\text{$2$} & 0 & 1 & 0  & 0 \\
\text{$3$} & 0 & 1 & 1  & 1 \\
\text{$4$} & 0 & 1 & 0  & 1 \\
\end{array}
\]

Consider the fractional committee $x=(\frac{1}{2},\frac{1}{2},\frac{1}{2},\frac{1}{2})$. The utilities induced by $x$ are $u_1(x)=u_2(x)=\frac12$, $u_3(x)=\frac{3}{2}$, and $u_4(x)=1$.

We first show that $x$ belongs to the fractional core. Voter $1$ derives utility only from $c_1$, so any coalition containing voter $1$ must allocate more than $\frac{1}{2}$ to $c_1$ in order to improve them. This leaves the coalition with too little budget to strictly improve its remaining members. Hence, no blocking coalition contains voter $1$.

Any coalition not containing voter $1$ has size at most $3$ and therefore budget at most $\frac32$. Such a coalition cannot contain voter $3$, since voter $3$ already obtains utility $\frac{3}{2}$ from $x$. It must therefore be contained in $\{2,4\}$. However, voter $4$ already obtains utility $1$, which is the entire budget available to the coalition $\{2,4\}$. Thus, voter $4$ cannot be strictly improved either. The only remaining possible coalition is the singleton $\{2\}$, whose budget is $\frac12=u_2(x)$. Therefore, no coalition can block $x$, and $x$ belongs to the fractional core.

Applying \sysrounding to $x$ with respect to the order $c_1 \triangleright c_2 \triangleright c_3 \triangleright c_4$ results in $W^1=\{c_1,c_3\}$ and $W^2=\{c_2,c_4\}$, each with probability $\frac{1}{2}$. However, $W^1$ is not core-stable up to one candidate. To see that, note that $S=\{2,4\}$ and $T=\{c_2\}$ witness a violation. Both voters $2$ and $4$ obtain $0$ utility from $W^1$ and adding any one candidate gives utility at most $1=u_2(T)=u_4(T)$. Therefore the required strict inequality after adding one candidate fails for both voters.  
\end{example}

\subsection{Fractional EJR+}

Before we adapt EJR+ to the fractional setting, we provide some notation that will be useful for the rest of this section. We remark that \cite{jin2026limitations} have concurrently and independently introduced the same notion of fractional EJR+. 
Given an approval-based election $E=(N,C,U,k)$, for any voter $i$, let $A_i=\{c \in C : u_i(c)=1\}$ be their approval set. For any $c \in C$, let $N_c:=\{ i \in N : u_i(c)=1\}$ be the set of voters approving $c$. 

EJR+ was introduced as a strict strengthening of EJR in approval-based elections with the desired property of being verifiable in polynomial time.

\paragraph{EJR+ \cite{10.1145/3580507.3597785}.} Given an election $E=(N,C,U,k)$ with approval preferences, a feasible integral committee $W$ satisfies $EJR+$ if there is no candidate $c \in C \setminus W$, coalition $S \subseteq N_c$ and $\ell \in [k]$ such that $|S|\geq \ell\frac{n}{k}$ and $u_i(W)<\ell$ for all $i \in S$.

Inspired by the definition of EJR+, we define a continuous version of it which remains strictly stronger than the continuous version of EJR and is verifiable in polynomial time.

\paragraph{Fractional EJR+.} Given an approval-based election $E=(N,C,U,k)$, a (partial) fractional committee $x$ satisfies fractional EJR+ if, for every candidate $c \in C$ with $x_c<1$ and every nonempty coalition $S \subseteq N_c$, there exists a voter $i \in S$ such that $u_i(x) \geq \frac{|S| k}{n}$.

To compare fractional EJR+ with the analogue of EJR, we first recall the latter notion. We use the term fractional EJR to be consistent, but the axiom has been introduced by \citet{Bei2025} under the name of Cake-EJR. It has also appeared in related work, including \citet{aziz2024fair} and \citet{10.1145/3736252.3742637}. 

\paragraph{Fractional EJR.} Given an approval-based election $E=(N,C,U,k)$, a fractional committee $x$ satisfies fractional EJR if for every positive real number $t$ and every coalition $S \subseteq N$ such that $|S| \geq t \frac{n}{k}$ and $|\bigcap_{i \in S} A_i| \geq t$, there exists a voter $i \in S$ such that $u_i(x) \geq t$. 

To justify the $+$ terminology, see that $x_c<1$ is the natural equivalent of $c$ not being selected in the integral setting. Furthermore, we show that fractional EJR+ strictly strengthens fractional EJR.

\begin{theorem}
    Given an approval-based election $E=(N,C,U,k)$ and a fractional committee $x$, if $x$ satisfies fractional EJR+ then it also satisfies fractional EJR.
\end{theorem}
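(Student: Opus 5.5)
The argument is by contradiction. Suppose $x$ satisfies fractional EJR+ but violates fractional EJR. Then there is a real $t>0$ and a coalition $S\subseteq N$ with $|S|\geq t\frac{n}{k}$ and $|\bigcap_{i\in S}A_i|\geq t$ such that $u_i(x)<t$ for every $i\in S$. We will find a candidate $c\in\bigcap_{i\in S}A_i$ with $x_c<1$ and then apply fractional EJR+ to $c$ and $S$.

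\emph{Step 1: finding $c$.} Let $K=\bigcap_{i\in S}A_i$. Suppose every $c\in K$ had $x_c=1$. Fix any $i\in S$. Since $K\subseteq A_i$, we get
\[
u_i(x)=x(A_i)\geq x(K)=|K|\geq t,
\]
which contradicts $u_i(x)<t$. So some $c\in K$ has $x_c<1$. Because $c\in K$, every voter in $S$ approves $c$, so $S\subseteq N_c$.

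\emph{Step 2: applying fractional EJR+.} Fractional EJR+ for the candidate $c$ and the nonempty coalition $S\subseteq N_c$ gives a voter $i\in S$ with
\[
u_i(x)\geq \frac{|S|k}{n}\geq t.
\]
This contradicts $u_i(x)<t$, so $x$ satisfies fractional EJR.

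Both steps are short. The only point that needs care is Step~1, and in particular the observation that $|K|\geq t$ with $t$ real still forces $x(K)\geq t$ when all of $K$ is fully selected. This holds simply because $x(K)=|K|$ in that case. Note also that $S$ is nonempty, since $t>0$ forces $|S|>0$.
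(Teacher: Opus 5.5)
Your proof is correct and follows the paper's argument essentially step for step. You assume an EJR violation, use $u_i(x)<t\le|K|$ to find a candidate $c\in\bigcap_{i\in S}A_i$ with $x_c<1$, and then apply fractional EJR+ to $c$ and $S\subseteq N_c$ to reach a contradiction via $\frac{|S|k}{n}\ge t$. You spell out explicitly the step $x(A_i)\ge x(K)=|K|$ and the nonemptiness of $S$, which the paper leaves implicit.
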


\begin{proof}
    Assume for the sake of contradiction that $x$ does not satisfy fractional EJR. Then there exist $t \in \mathbb{R}_{> 0}$ and a coalition $S \subseteq N$ with $|S| \geq t\frac{n}{k}$, $|\bigcap_{i \in S} A_i| \geq t$, and $u_i(x) < t$ for all $i \in S$. Since $u_i(x) <t \leq u_i(\bigcap_{j \in S} A_j)$ there must exist $c \in \bigcap_{i \in S} A_i$ with $x_c < 1$. This candidate $c$ satisfies $x_c < 1$ with $S \subseteq N_c$ and $u_i(x) < t \leq \frac{|S|k}{n}$ for all $i \in S$, contradicting that $x$ satisfies fractional EJR+.
\end{proof}

To show that fractional EJR+ is a strict strengthening it remains to show that there exists an approval-based election $E$ and a fractional committee $x$ that satisfies fractional EJR but not fractional EJR+. To that end, consider the following example:

\begin{example} \label{ex:fractejr+}
    Let $E=(N,C,U,k=2)$ be the election described by the following utility profile

        \[
\begin{array}{c|cccc}
      & c & p_1 & p_2 & p_3 \\
\hline
1 & 1 & 1 & 0 & 0\\
2 & 1 & 0 & 1 & 0  \\
3 & 1 & 0 & 0 & 1 \\
\end{array}
\]

and consider the fractional committee $x=(\frac{1}{2},\frac{1}{2},\frac{1}{2},\frac{1}{2})$. Then, $u_i(x)=1$ for all $i \in N$. For any singleton coalition $S$, the largest admissible $t$, with respect to the definition of a cohesive group, is $\frac{2}{3}$ and therefore fractional EJR requires that $u_i(x) \geq \frac{2}{3}$ for all $i \in N$, which is trivially true. Any coalition $S \subseteq N$ of size  $|S| \geq2$ agrees on at most 1 candidate, so for any such cohesive coalition $S$, fractional EJR requires that there exists at least one voter $i \in S$ such that $u_i(x) \geq 1$, which is also satisfied.

On the other hand, candidate $c$ combined with $S=\{1,2,3\}$ witnesses a violation of fractional EJR+, since $x_c<1$ and for every voter $i \in S$,  $u_i(x) < 2=\frac{|S|k}{n}$.
\end{example}

The polynomial-time verifiability of fractional EJR+ for a fractional committee 
$x$ mirrors that of EJR+ for an integral committee $W$.

\subsection{Fractional EJR+ and Fractional Core Stability}
We now move on to show that the incomparability of EJR+ and core stability extends to fractional committees. We first show that fractional core stability does not imply fractional EJR+.

\begin{example}
    Let $E=(N,C,U,k=2)$ be the approval-based election described by the following utility profile: 

            \[
    \begin{array}{c|ccc}
      & c_1 & c_2 & c_3 \\
    \hline
    1 & 1 & 1 & 0\\
    2 & 1 & 0 & 1\\
    3 & 0 & 1 & 1\\
    \end{array}    
\]

and consider the fractional committee $x=(\frac{3}{4},\frac{3}{4},\frac{1}{2})$. Then $u_2(x)=u_3(x)=\frac{5}{4}$ and $u_1(x)=\frac{3}{2}$. Candidate $c_3$ combined with $S=\{2,3\}$ witnesses a fractional EJR+ violation. Indeed $x_{c_3}<1$ and $u_2(x)=u_3(x)=\frac{5}{4}<|S|\frac{k}{n}=2\cdot\frac{2}{3}=\frac{4}{3}$.

 On the other hand, $x$ is fractionally core-stable. Consider any coalition $S$, which may deviate only to a partial committee $y$ with $y(C)\leq|S|\frac{k}{n}$.
    \begin{itemize}
        \item $|S|=1$: The budget is $y(C)\leq\frac{2}{3}$, so any member $i$ obtains $u_i(y)\leq y(C)\leq\frac{2}{3}<\frac{5}{4}\leq u_i(x)$.
        \item$|S|=2$: The budget is $y(C)\leq\frac{4}{3}$. Such a coalition cannot include voter $1$, since $u_1(y)\leq y(C)\leq\frac{4}{3}<\frac{3}{2}=u_1(x)$. The only remaining case is $S=\{2,3\}$. Using $y_{c_3}\leq 1$,
        $
        u_2(y)+u_3(y)=(y_{c_1}+y_{c_3})+(y_{c_2}+y_{c_3})=y(C)+y_{c_3}\leq\tfrac{4}{3}+1=\tfrac{7}{3},
        $
        so at least one of the two receives utility at most $\frac{7}{6}<\frac{5}{4}$ and is therefore not improved.
        \item$S=N$: Since every candidate is approved by exactly two voters, any committee $y$ with $y(C)\leq k=2$ satisfies $\sum_{i\in N}u_i(y)\leq 2k=4$. As $\sum_{i\in N}u_i(x)=2\cdot\frac{5}{4}+\frac{3}{2}=4$, at least one voter has $u_i(y)\leq u_i(x)$.
    \end{itemize}
    In every case, there exists $i \in S$ such that $u_i(x) \geq u_i(y)$. Thus $x$ is fractionally core-stable.
\end{example}

In order to show that fractional EJR+ does not imply fractional core, consider the election shown in \Cref{ex:fractejr+}, now with the fractional committee $x'=(1,0,\frac{1}{4},\frac{3}{4})$. Then $u_1(x')=1$, $u_2(x')=\frac{5}{4}$ and $u_3(x')=\frac{7}{4}$.

First, we argue that $x'$ satisfies fractional EJR+. To show this, note that we do not need to check $c$ since $x_c'=1$ and any other candidate is only approved by a single voter whose utility $u_i(x') \geq 1 > \frac{2}{3}$, which is the proportional entitlement of a singleton coalition.

Nevertheless, $x'$ is not in the fractional core. Consider the coalition $S=\{1,2\}$ and the partial committee $y=(1,\varepsilon, \frac{1}{3}-\varepsilon,0)$ for some $\varepsilon \in (0, \frac{1}{12})$. Partial committee $y$ satisfies $y(C)=1+\frac{1}{3}=\frac{4}{3} \leq |S|\frac{2}{3}$ and thus is affordable by $S$. Furthermore, $u_1(y)=1+\varepsilon > 1 = u_1(x')$ and $u_2(y) = \frac{4}{3}-\varepsilon>\frac{5}{4}=u_2(x')$ for every $\varepsilon \in (0, \frac{1}{12})$. Thus, $y$ fractionally blocks $x'$.

We now move on to showing the analogue of \Cref{lem:expostcore} for EJR+.

\begin{theorem}
    For elections that are both approval-based and antichain single-peaked, applying \sysrounding to any dominance-respecting fractional committee $x$ satisfying fractional EJR+ yields a lottery that ex-post satisfies EJR+. 
\end{theorem}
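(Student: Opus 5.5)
We argue by contradiction, in parallel with the proof of \Cref{lem:expostcore}, with \Cref{claim:low-bound-approval} as the main tool. Let $W$ be any outcome of \sysrounding on $(x,\triangleright)$ and suppose $W$ violates EJR+. Then there are a candidate $c\in C\setminus W$, a coalition $S\subseteq N_c$ and an $\ell\in[k]$ with $|S|\geq \ell\frac{n}{k}$ and $u_i(W)<\ell$ for all $i\in S$. Since $u_i(W)$ is an integer, this means $u_i(W)\leq \ell-1$ for every $i\in S$.

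First, we transfer the violation to $x$. Since $c\notin W$ and every candidate with $x_c=1$ is always selected by systematic rounding, we get $x_c<1$. So $c$ is a candidate with $x_c<1$, and $S\subseteq N_c$ is a nonempty coalition. Fractional EJR+ of $x$ therefore yields a voter $i\in S$ with $u_i(x)\geq \frac{|S|k}{n}\geq \ell$. Because $\ell$ is an integer, $\lfloor u_i(x)\rfloor\geq \ell$. By \Cref{claim:low-bound-approval}, which applies since the election is approval-based and antichain single-peaked and $x$ is dominance-respecting, we obtain $u_i(W)\geq\lfloor u_i(x)\rfloor\geq \ell$. This contradicts $u_i(W)<\ell$.

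The main subtlety is the step $x_c<1$. It needs the fact that every candidate with $x_c=1$ lies in every outcome of systematic rounding, since such a candidate's interval has length one and thus contains a sampling point. A second point to check is that \Cref{claim:low-bound-approval} may be used even though $x$ might be only a partial fractional committee. If the theorem is meant for partial committees, I would first note that we may restrict to full committees, as systematic rounding is defined for sum $k$; otherwise the statement concerns fractional committees of sum $k$ and no extra work is needed. The remaining steps are routine integrality arguments.
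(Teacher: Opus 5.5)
Your proof is correct and is essentially the paper's argument. Both derive $x_c<1$ from $c\notin W$ and combine \Cref{claim:low-bound-approval} with the integrality of $\ell$ and $u_i(W)$; the only difference is ordering, as you use fractional EJR+ to get one voter with $u_i(x)\geq\ell$, while the paper first shows $u_i(x)<\ell\leq k|S|/n$ for every $i\in S$ and then contradicts fractional EJR+.
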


\begin{proof}
    Let $W \sim \sysroundinginput$ and suppose $W$ violates EJR+. Then there exist a candidate $c \in C \setminus W$, an integer $\ell \in [k]$, and a coalition $S \subseteq N_c$ with $|S| \geq \ell \tfrac{n}{k}$ such that $u_i(W) < \ell$ for every $i \in S$.

    Fix any $i \in S$. Since $x$ is dominance-respecting, \Cref{claim:low-bound-approval} gives $\lfloor u_i(x) \rfloor \leq u_i(W)$. As $u_i(W)$ and $\ell$ are integers, $u_i(W) < \ell$ implies $u_i(x) < \ell$.

    Moreover $c \notin W$ implies $x_c < 1$. Thus $c$ and $S \subseteq N_c$ satisfy $x_c<1$ and $u_i(x) < k \frac{|S|}{n}$ for all $i \in S$, contradicting that $x$ satisfies fractional EJR+.
\end{proof}

\subsection{Lindahl Equilibrium}
 In this subsection, we show that any committee $x$ that is part of a zero-respecting Lindahl Equilibrium is dominance-respecting and satisfies fractional EJR+ in the case of approval-based elections.

We first adjust the definition of a Lindahl Equilibrium to our committee voting setting, where for every candidate $c$ it holds that $0\leq x_c \leq 1$, and every voter $i \in N$ is allocated budget equal to $\frac{k}{n}$. We mention that the definition of Lindahl Equilibrium extends for any collection of nonnegative caps $cap_c$ such that $0 \leq x_c \leq cap_c$ and nonnegative personalised budgets $B_i$ for every voter $i \in N$.

\paragraph{Lindahl Equilibrium} Let $x$ be a (partial) fractional committee and let \(p=(p_i)_{i\in N}\) be a profile of personalized price vectors \(p_i \in \mathbb{R}_{\ge 0}^m\).  
We say that \((x,p)\) is a \emph{Lindahl equilibrium} if:

\begin{itemize}
    \item \textbf{affordability:} for every \(i \in N\):    \(  \sum_{c\in C} p_{ic}x_c \le \frac{k}{n}.\)

    \item \textbf{utility-maximization:} for every \(i \in N\), the committee \(x\) maximizes \(u_i(y)\) over all $y \in [0,1] ^C : \sum_{c\in C} p_{ic}y_c \le \frac{k}{n}.
    $

    \item \textbf{profit-maximization:} for every \(c \in C\),
    \(
    \sum_{i\in N} p_{ic} \le 1,
    \)
    with equality whenever \(x_c>0\).
\end{itemize}

\paragraph{Zero-respecting.} A Lindahl equilibrium $(x,p)$ is \emph{zero-respecting} if $u_i(c)=0 \Rightarrow p_{ic}=0$. 

\begin{lemma}\label{lem:lindahl}
Let $E=(N,C,U,k)$ be an election with additive utilities and let $(x,p)$
be a zero-respecting Lindahl equilibrium of $E$. Then $x$ is dominance-respecting.
\end{lemma}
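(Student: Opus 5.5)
The plan is a direct contradiction argument using only the three equilibrium conditions and zero-respectiveness. Suppose $c$ dominates $d$, $x_d>0$, but $x_c<1$. The goal is to show that every voter's personalized price for $c$ is at least their price for $d$, with strict inequality for some voter. Summing over voters then gives $\sum_i p_{ic}>\sum_i p_{id}$. Since $x_d>0$, profit-maximization forces $\sum_i p_{id}=1$, so $\sum_i p_{ic}>1$, which contradicts profit-maximization for $c$.

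\emph{Two local moves for a single voter.} Fix a voter $i$ and exploit utility-maximization through two local deviations from $x$; both remain in $[0,1]^C$ because $x_c<1$ and $x_d>0$.
\begin{itemize}
\item \emph{Free increase:} if $u_i(c)>0$ and $p_{ic}=0$, voter $i$ can raise $y_c$ slightly above $x_c$ at no cost and strictly gain. So $u_i(c)>0$ implies $p_{ic}>0$.
\item \emph{Budget-neutral shift:} if $p_{ic}>0$, voter $i$ can lower $y_d$ by $\varepsilon$ and raise $y_c$ by $\varepsilon p_{id}/p_{ic}$. This changes utility by $\varepsilon\bigl(u_i(c)p_{id}/p_{ic}-u_i(d)\bigr)$, which must be $\le 0$. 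Hence $u_i(c)\,p_{id}\le u_i(d)\,p_{ic}$.
\end{itemize}

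\emph{Case analysis for the weak inequality $p_{ic}\ge p_{id}$ for every $i$.}
\begin{itemize}
\item If $u_i(d)=0$, zero-respectiveness gives $p_{id}=0\le p_{ic}$.
\item If $u_i(d)>0$, then $u_i(c)\ge u_i(d)>0$. The free-increase move gives $p_{ic}>0$, and the shift inequality gives $p_{ic}\ge p_{id}\,u_i(c)/u_i(d)\ge p_{id}$.
\end{itemize}

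\emph{Strictness for a voter $j$ with $u_j(c)>u_j(d)$.}
\begin{itemize}
\item If $p_{jd}=0$: since $u_j(c)>0$, the free-increase move gives $p_{jc}>0=p_{jd}$.
\item If $p_{jd}>0$: then $u_j(d)>0$ by zero-respectiveness, and the shift inequality gives $p_{jc}\ge p_{jd}\,u_j(c)/u_j(d)>p_{jd}$.
\end{itemize}
Summing over voters yields the contradiction described above.

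The main obstacle is the bookkeeping of degenerate cases: zero prices or zero utilities where the ratio argument breaks down, and making sure each deviation is feasible. Feasibility uses $x_c<1$ for increasing $c$ and $x_d>0$ for decreasing $d$, which are exactly the assumptions of the contradiction. Zero-respectiveness is what rules out $p_{id}>0$ when $u_i(d)=0$. Without it, a voter indifferent to $d$ could pay for $d$ and break the inequality, which is why the hypothesis is needed.
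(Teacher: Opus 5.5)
Your proof is correct and follows the paper's argument. You use the same two utility-maximization deviations (the free increase of $c$ and the budget-neutral shift from $d$ to $c$) to get $p_{id}\le p_{ic}$ for every voter, with strict inequality for a voter who separates $c$ and $d$, and summing against profit-maximization gives the contradiction; the only cosmetic difference is that the paper splits on whether $u_i(c)=0$ and reads strictness directly off $p_{id}\le p_{ic}\,u_i(d)/u_i(c)$, which spares your separate subcases on $p_{jd}$.
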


\begin{proof}
Suppose for contradiction that $c$ dominates $d$, that $x_d>0$, and that $x_c<1$.
We claim that $p_{id}\le p_{ic}$ for every voter $i\in N$, with strict inequality
whenever $u_i(d)<u_i(c)$.

Consider first a voter $i$ with $u_i(c)=0$. Then $u_i(d)\le u_i(c)=0$, so
zero-respectiveness gives $p_{id}=0\le p_{ic}$, and the claim holds (no strictness
is required, as $u_i(d)=u_i(c)$). Now let $i$ be a voter with $u_i(c)>0$. Such a
voter has $p_{ic}>0$: otherwise, since $x_c<1$, voter $i$ could increase $y_c$ at
no cost and strictly improve their utility without violating affordability,
contradicting utility-maximization. 

Because $x_d>0$ and $x_c<1$, for all
sufficiently small $\delta>0$ the vector $y$ given by $y_d=x_d-\delta$,
$y_c=x_c+\delta\,p_{id}/p_{ic}$, and $y_e=x_e$ for every $e\in C\setminus\{c,d\}$
satisfies $y\in[0,1]^C$; moreover, the budget freed at $d$ is $\delta\,p_{id}$ and
the additional amount spent at $c$ is $(\delta\,p_{id}/p_{ic})\cdot p_{ic}=\delta\,p_{id}$,
so $y$ costs voter $i$ exactly as much as $x$ and is therefore affordable. Utility
Maximization thus gives $u_i(y)\le u_i(x)$, that is,
$-\delta\,u_i(d)+\delta\,\tfrac{p_{id}}{p_{ic}}\,u_i(c)\le 0$, or equivalently
$p_{id}\,u_i(c)\le p_{ic}\,u_i(d)$. Since $u_i(c)>0$ and 
$u_i(d)\le u_i(c)$ we get $p_{id} \leq p_{ic},$
and if $u_i(d)<u_i(c)$ then $u_i(d)/u_i(c)<1$, which together with $p_{ic}>0$ makes the  inequality strict. This proves the claim.

Since $c$ dominates $d$, there is a voter $j$ with $u_j(d)<u_j(c)$. In particular $u_j(c)>0$, so the claim applies to $j$ in its strict form and gives $p_{jd}<p_{jc}$. Summing the claim over all voters and using profit-maximization at $d$ and $c$, we obtain
$1=\sum_{i\in N}p_{id}<\sum_{i\in N}p_{ic}\le 1,$ where the first equality holds by the fact that $x_d>0$,
a contradiction. Hence no such pair $c,d$ exists and $x$ is dominance-respecting.
\end{proof}

A Lindahl equilibrium need not exhaust the committee size, i.e. it may return a fractional committee $x$ with $x(C)<k$. In that case $x$ can be augmented to size exactly $k$ without sacrificing either of the two properties we rely on. Since the dominance relation is a strict partial order, it admits a linear extension: an order $c_1\triangleright c_2\triangleright\cdots\triangleright c_m$
such that $c\triangleright d$ whenever $c$ dominates $d$. Computing the dominance relation takes $O(nm^2)$ time and extending it linearly amounts to a topological sort, so such an order is available in polynomial time. We now process the candidates in this order, raising $x_{c_j}$ until either $x_{c_j}=1$ or $x(C)=k$, moving on to the next candidate in the former case and stopping in the latter. Because $m\ge k$, the procedure terminates having
reached $x(C)=k$.

The resulting committee $x^\ast$ is again fractionally core stable: it dominates $x$ coordinatewise, so $u_i(x^\ast)\ge u_i(x)$ for every voter $i$, and any coalition fractionally blocking $x^\ast$ via some $y$ would satisfy $u_i(y)>u_i(x^\ast)\ge u_i(x)$ for all its members and hence already block $x$.
It is also dominance-respecting since for any candidate $d$ if $x_d$ is increased by the procedure above, then every candidate $d'$ dominating $d$ precedes them in the linear extension and therefore must already have reached $x_{d'}=1$.

Lastly, we prove that a committee $x$ of a Lindahl Equilibrium $(x,p)$ also satisfies fractional EJR+.

\begin{lemma}
        Given an approval-based election $E=(N,C,U,k)$, let $(x,p)$ be a zero-respecting Lindahl Equilibrium of $E$. Then $x$ satisfies fractional EJR+.
\end{lemma}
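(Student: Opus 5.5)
Suppose, for contradiction, that some candidate $c$ with $x_c<1$ and some nonempty $S\subseteq N_c$ satisfy $u_i(x)<\frac{|S|k}{n}$ for every $i\in S$. The plan is to use the price structure of the Lindahl equilibrium to bound the total price that voters in $S$ pay for $c$, and then to contradict profit-maximization at $c$ together with the per-voter budgets $\frac{k}{n}$.

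\emph{Step 1: each voter in $S$ exhausts the budget on approved candidates.} Fix $i\in S$. Since $u_i(c)=1$ and $x_c<1$, utility-maximization forces $p_{ic}>0$; otherwise $i$ could raise $y_c$ for free. Zero-respectiveness gives $p_{id}=0$ for all $d\notin A_i$. Hence only approved candidates cost $i$ anything, and $i$'s spending is $\sum_{d\in A_i}p_{id}x_d$. I will also argue that this spending equals $\frac{k}{n}$. If it were smaller, then, because $x_c<1$ and $p_{ic}$ is finite, $i$ could raise $x_c$ slightly within budget and gain utility, contradicting optimality.

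\emph{Step 2: a bang-per-buck bound.} With approval utilities, utility-maximization means $i$ buys approved candidates in increasing order of price, like a fractional knapsack with unit values. Since $c$ is approved and not fully bought ($x_c<1$), every approved $d$ with $x_d>0$ must satisfy $p_{id}\le p_{ic}$. If instead $p_{id}>p_{ic}$, shifting a little mass from $d$ to $c$ would keep $i$ within budget and strictly increase utility (the exchange argument used in \Cref{lem:lindahl}). Therefore
\[\frac{k}{n}=\sum_{d\in A_i}p_{id}x_d\le p_{ic}\sum_{d\in A_i}x_d=p_{ic}\,u_i(x),\]
so $p_{ic}\ge \frac{k}{n\,u_i(x)}$. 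In particular $u_i(x)>0$, which follows from the positive spending in Step 1.

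\emph{Step 3: summing.} Using $u_i(x)<\frac{|S|k}{n}$, each $i\in S$ satisfies $p_{ic}>\frac{k/n}{|S|k/n}=\frac1{|S|}$. Summing over $S$ gives $\sum_{i\in N}p_{ic}\ge\sum_{i\in S}p_{ic}>1$, which contradicts the profit-maximization condition $\sum_i p_{ic}\le 1$, valid for every $c$.

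The main obstacle is Step 1, namely showing that voters in $S$ actually spend their full budget $\frac{k}{n}$; the other steps are routine exchange arguments. The argument from $x_c<1$ and $p_{ic}>0$ finite settles it, provided the utility-maximization definition is read with the committee $x$ fixed across voters, as the paper states it. I will also check the degenerate case $u_i(x)=0$ separately: there all approved candidates with $x_d>0$ are absent, so spending is $0<\frac{k}{n}$, and Step 1 already yields a contradiction.
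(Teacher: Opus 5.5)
Your proof is correct and uses the same ingredients as the paper: budget exhaustion, zero-respectiveness, the exchange bound $p_{id}\le p_{ic}$, and profit-maximization at $c$. The only difference is in the last step, which is just the contrapositive of yours: the paper uses pigeonhole to pick one voter in $S$ with $p_{ic}\le 1/|S|$ and derives the contradiction for that voter, whereas you show $p_{ic}>1/|S|$ for every voter in $S$ and sum.
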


\begin{proof}
Suppose for contradiction that $x$ violates fractional EJR+: there exist a
candidate $c \in C$ with $x_c < 1$ and a coalition $S \subseteq N_c$ such that
$u_i(x) < k\nicefrac{|S|}{n}$ for every $i \in S$. By profit-maximization,   $\sum_{i \in S} p_{ic} \leq \sum_{i \in N_c} p_{ic} \le 1$.
Hence there exists a voter $i \in S$ with $p_{ic} \le \tfrac{1}{|S|}$. Consider this voter $i$. Since $i \in S \subseteq N_c$, we get that $c \in A_i$. Moreover, $p_{ic}>0$: otherwise, since $x_c<1$, voter $i$ could increase $x_c$ at no additional cost and strictly increase their utility, contradicting utility-maximization. For the same reason, voter $i$'s budget must be exhausted, otherwise $x_c$ could be increased slightly while remaining affordable. Since the equilibrium is zero-respecting, $i$ allocates their budget on approved candidates. Together with budget-exhaustion, this gives $\sum_{d \in A_i} p_{id} x_d = \frac{k}{n}$. Furthermore, we have that $p_{id} \leq p_{ic}$ for every $d \in A_i$ with $x_d>0$. Indeed, if $p_{id} > p_{ic}$, then for sufficiently small $\varepsilon>0$, decreasing $x_d$ by $\varepsilon$ and increasing $x_c$ by $\frac{\varepsilon p_{id}}{p_{ic}}$ yields another affordable vector in $[0,1]^C$ for voter $i$, while strictly increasing their utility, again contradicting utility-maximization.

Combining these with $p_{ic} \le \frac{1}{|S|}$, we get that  $\frac{k}{n} = \sum_{d \in A_i} p_{id} x_d \le p_{ic} \sum_{d \in A_i} x_d \le \frac{1}{|S|} \sum_{d \in A_i} x_d$. Thus $\sum_{d \in A_i}x_d \geq k \frac{|S|}{n}$. Observing that for approval-based elections $u_i(x) = \sum_{d \in A_i} x_d$ we conclude that $u_i(x) \ge k\tfrac{|S|}{n}$, contradicting $u_i(x) < k\tfrac{|S|}{n}$.
\end{proof}

It is easy to see that extending a (partial) committee $x$ of a zero-respecting Lindahl Equilibrium $(x,p)$ to $x'$  as described above, preserves fractional EJR+ as well.  

Combining the results in \Cref{sec:ExactCore} with the results shown in \Cref{app:ExactCore} we get the following:

\begin{corollary}
Under antichain single-peaked preferences, there exists a voting rule that is ex-ante fractionally core-stable and ex-post core-stable up to one candidate. If preferences are additionally approval-based, there exists a voting rule that satisfies both fractional core stability and fractional EJR+ ex ante, and both core stability and EJR+ ex post.
\end{corollary}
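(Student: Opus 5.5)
The corollary is an assembly of results already established; the plan is to exhibit the rule explicitly and check each guarantee against the earlier statements. The rule is: compute a zero-respecting Lindahl equilibrium $(x,p)$, which exists by \citet{10.1145/3736252.3742510}. If $x(C)<k$, augment it to size $k$ along a linear extension of the dominance relation, as described after \Cref{lem:lindahl}. Then apply \sysrounding to the resulting $x^\ast$ with the order $\triangleright$ that witnesses antichain single-peakedness.

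\textbf{Ex-ante guarantees.}
\begin{itemize}
\item Lindahl equilibria are fractionally core stable \citep{foley1970lindahl,fain2016core}. The argument there is that a blocking $y$ would be strictly preferred by every coalition member, hence unaffordable at that member's personalized prices; summing over the coalition and using profit maximization gives $y(C)>|S|k/n$. This remains valid for partial committees $x$.
\item The augmentation preserves core stability, since $x^\ast\geq x$ coordinatewise, as argued in the appendix.
\item By \Cref{thm:syssampling}, systematic rounding implements $x^\ast$ exactly, so the ex-ante outcome is $x^\ast$ and is fractionally core stable.
\end{itemize}

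\textbf{Dominance-respectiveness.} \Cref{lem:lindahl} shows that $x$ is dominance-respecting. The augmentation preserves this because any candidate dominating $d$ precedes $d$ in the linear extension, so it is already at $1$ whenever $x_d$ is raised.

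\textbf{Ex-post guarantee for additive preferences.} With $x^\ast$ fractionally core stable and dominance-respecting, \Cref{thm:coreUpToOne} gives ex-post core stability up to one candidate under antichain single-peaked preferences.

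\textbf{Approval-based case.} Here \Cref{lem:expostcore} upgrades the ex-post guarantee to exact core stability. For EJR+:
\begin{itemize}
\item By the lemma on Lindahl equilibria and fractional EJR+, the zero-respecting equilibrium $x$ satisfies fractional EJR+.
\item The augmentation preserves fractional EJR+. Raising coordinates only increases every $u_i$, and it only shrinks the set of candidates with $x_c<1$, so any violation for $x^\ast$ would already be a violation for $x$.
\item The theorem transferring fractional EJR+ to ex-post EJR+ then applies to the dominance-respecting $x^\ast$.
\end{itemize}
Thus a single rule delivers all four properties simultaneously.

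\textbf{Main obstacle.} The delicate point is the ex-ante core stability of a Lindahl equilibrium that is only partial, together with the preservation of every property under augmentation. For core stability I would recheck the standard pricing argument: strict preference together with utility maximization forces $\sum_c p_{ic}y_c>k/n$ for each member $i$. Summing over $i\in S$ gives $\sum_c y_c\sum_{i\in S}p_{ic}>|S|k/n$, and $\sum_{i\in S} p_{ic}\le 1$ then yields $y(C)>|S|k/n$. This relies only on price caps of at most $1$, not on exhaustion of the committee size. The remaining steps are direct invocations of the earlier results.
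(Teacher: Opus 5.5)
Your proposal is correct and follows the paper's own assembly: a zero-respecting Lindahl equilibrium, augmented to size $k$ along a linear extension of dominance and then rounded with \sysrounding, combined with \Cref{lem:lindahl}, \Cref{thm:coreUpToOne}, \Cref{lem:expostcore}, the Lindahl/fractional EJR+ lemma, the augmentation-preservation arguments, and the EJR+ transfer theorem. One small point: that \sysrounding implements $x^\ast$ is the elementary fact that a half-open interval of length $x_c\le 1$ contains one of the unit-spaced points with probability exactly $x_c$; \Cref{thm:syssampling} presupposes this rather than proving it.
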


\section[Missing Material of Section~\ref*{sec:approxCore}]{Missing Material of \Cref{sec:approxCore}}\label{app:approxCore}

We first show that while approximating the fractional core in terms of size implies the same factor approximation in terms of utility, the opposite is not true. Indeed, for every finite $\beta$ there exists an election and a committee $x$ such that $x$ satisfies $(\alpha,1)$-fractional core for some $\alpha>1$, but violates $(1,\beta)$-fractional core.

\begin{proposition}\label{prep:alpha_beta_incomp}
    For every $\alpha > 1$ and every $\beta \geq 1$, there exists an election $E=(N,C,U,k)$ and a fractional committee $x$ such that $x$ is $(\alpha,1)$-fractionally core stable, but not $(1,\beta)$-fractionally core stable.
\end{proposition}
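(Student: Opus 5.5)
The plan is to use a single voter, so that the only coalition is $S=N$. For such a voter the $(\alpha,1)$-condition compares $u_1(y)$ with a multiplicatively inflated $\alpha u_1(x)$. The $(1,\beta)$-condition instead shrinks the budget additively, from $k$ to $k/\beta$. A committee that already puts most of its weight on one huge-value candidate, but not all of it, is hurt only slightly in multiplicative terms, yet can be beaten by a cheap deviation that completes that candidate.

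\emph{Construction.} Fix $\alpha>1$ and $\beta\geq 1$. Let $N=\{1\}$ and choose an integer $k\geq\beta$, so that $k/\beta\geq 1$. Let $C=\{h,c_1,\dots,c_k\}$ with $u_1(h)=M$ and $u_1(c_j)=1$, where $M$ is a large parameter fixed below. Choose $\eta\in(0,1)$ with $\alpha(1-\eta)\geq 1$, and set
\[
x_h=1-\eta,\qquad x_{c_1}=\dots=x_{c_{k-1}}=1,\qquad x_{c_k}=\eta .
\]
Then $x(C)=k$, and $u_1(x)=(1-\eta)M+k-1+\eta$.

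\emph{Step 1: $x$ is $(\alpha,1)$-fractionally core stable.} The only coalition is $N$, with budget $k$. Any $y\in[0,1]^C$ with $y(C)\leq k$ satisfies $u_1(y)\leq M+(k-1)$, since $y_h\leq 1$ and the remaining budget buys candidates of value $1$. It therefore suffices to have
\[
M+k-1\leq\alpha\bigl((1-\eta)M+k-1+\eta\bigr).
\]
This holds because $\alpha(1-\eta)\geq 1$ handles the $M$-terms and $\alpha\geq 1$ handles the remaining terms. This step is immediate, and it is exactly why $\eta$ was chosen so that $\alpha(1-\eta)\geq1$, which is possible since $\alpha>1$.

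\emph{Step 2: $x$ is not $(1,\beta)$-fractionally core stable.} Take $y_h=1$, put $y$-weight $k/\beta-1\geq 0$ on the $c_j$'s (spread so that each entry is at most $1$), and set all other entries to $0$. Then $y(C)=k/\beta$ and $u_1(y)=M+k/\beta-1$. The inequality $u_1(y)>u_1(x)$ is equivalent to
\[
\eta M> k-\frac{k}{\beta}+\eta,
\]
which holds once $M$ is chosen sufficiently large with $\eta$ and $k$ fixed. Since Step 1 holds for every $M$, taking $M$ large completes the construction.

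The only delicate point is the order of the choices: first $\eta$ depending on $\alpha$, then $k$ depending on $\beta$, and finally $M$. If a profile with more than one voter is preferred, one can add identical copies of voter $1$. Nothing changes, since each voter's budget scales accordingly and all copies behave identically.
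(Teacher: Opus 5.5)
Your proof is correct and is essentially the paper's construction: one candidate carries the value, $x$ gives it mass at least $1/\alpha$ (your $1-\eta$, the paper's $1/\alpha$) with the rest of the budget on fillers, the cap $y_h\le 1$ rules out any $(\alpha,1)$-deviation, and a budget-$k/\beta$ deviation that raises that candidate's mass blocks $x$. The paper gives the fillers utility $0$, so it needs no large parameter $M$. Your Step~1 bound $u_1(y)\le M+k-1$ actually requires $M\ge 1$ rather than holding ``for every $M$'', but this is harmless because you take $M$ large anyway.
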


\begin{proof}
    Fix $\alpha>1, \beta \geq 1$. Choose an integer $k$ satisfying $k > \max \{1, \frac{\beta}{\alpha}\}$. Let $C=\{t,c_1,c_2,c_3,\dots,c_k\}$, $u_i(t)=1 \text{ and } u_i(c_j)=0$ for all $i \in N$ and $j \in [k]$. Let $x$ be such that $x_t=\frac{1}{\alpha}$, $x_{c_j}=1$ for $j<k$ and $x_{c_k}=1-\frac{1}{\alpha}$. Then $x(C)=k$ and for every voter $i \in N$ it holds that $u_i(x)=\frac{1}{\alpha}$.

    Since every voter has maximum utility $1$, no voter can prefer any committee $y$ to $x$ by a factor greater than $\alpha$. Hence, $x$ is $(\alpha,1)$-fractionally core stable.

    Consider now the coalition $S=N$, and the alternative partial committee $y=(y_t,0,0,0,\dots,0)$ for some $\frac{1}{\alpha}<y_t<\min\{1,\frac{k}{\beta}\}$. First, we observe that such a $y_t$ does exist since by our choice of $k$, we have that $\frac{k}{\beta} > \frac{1}{\alpha}$. Furthermore, $S$ fractionally $(1,\beta)$-blocks $x$ since $y(C)=y_t < |S| \frac{k}{n \beta}=\frac{k}{\beta}$ and $u_i(y)=y_t>\frac{1}{\alpha}$ for all $i \in S$. Therefore $x$ is not $(1,\beta)$-fractionally core stable.
\end{proof}

\thmApproxLower*
\begin{proof}

    By \Cref{lem:approx-core}, applied with $\gamma=\beta$ it suffices to show the statement for $\alpha \in [1,2)$ and $\beta=1$.
    Choose an integer $k > 3$ and a number $\delta \in (0, \nicefrac{1}{2})$ such that $2(1-\delta)\frac{k-2}{k-1}>\alpha$ (since $\beta=1)$. Such numbers $k$ and $\delta$ exist since $2(1-\delta)\frac{k-2}{k-1} \to 2$ as $k \to \infty, \delta \to 0$. Consider the election $E=(N,C,U,k)$ described by the following single-peaked utility profile:   

\newcommand{\onecell}{\cellcolor{c1col!40}1}
\newcommand{\deltacell}{\cellcolor{c2col!40}1-\delta}
\begingroup
\setlength{\arraycolsep}{3.5pt}
\renewcommand{\arraystretch}{1.15}
\[
\begin{array}{c|cccccccc}
 & p_1 & q_1 & p_2 & q_2 & \cdots
 & p_{k-1} & q_{k-1} & p_k \\
\hline
1
 & \onecell & \deltacell & 0 & 0
 & \cdots & 0 & 0 & 0 \\
2
 & 0 & \deltacell & \onecell & \deltacell
 & \cdots & 0 & 0 & 0 \\
3
 & 0 & 0 & 0 & \deltacell
 & \ddots & 0 & 0 & 0 \\
\vdots
 & \vdots & \vdots & \vdots & \vdots
 & \ddots & \vdots & \vdots & \vdots \\
k-1
 & 0 & 0 & 0 & 0
 & \cdots & \onecell & \deltacell & 0 \\
k
 & 0 & 0 & 0 & 0
 & \cdots & 0 & \deltacell & \onecell
\end{array}
\]
\endgroup

    More precisely, \(N=[k]\), and each voter \(i\) assigns utility \(1\) to their peak candidate \(p_i\) and utility \(1-\delta\) to each adjacent candidate \(q_{i-1}\) and \(q_i\) that exists, assigning utility \(0\) to all other candidates.

    We first show that $W^\star=\{p_1,p_2,\dots,p_k\}$ is the unique (P)JR  committee. Let $S\subseteq N$ be an arbitrary $(\tau,T)$-cohesive coalition. Since $\max_{c \in C}u_i(c)=1$ for every voter $i$, we have $\tau(c) \leq 1$ for all $c \in T$, and hence $\tau(T) \leq |T|$. Then cohesiveness together with $n=k$ gives $\tau(T) \leq |T| \leq |S|$. On the other hand, every voter $i \in S$ has $p_i \in W^\star$ with $u_i(p_i)=1$, and the candidates $p_1,\dots,p_k$ are distinct so $\sum_{c \in W^\star} \max_{i \in S} u_i(c)=|S| \geq \tau (T)$ and therefore PJR is satisfied. Restricting to singleton coalitions $S=\{i\}$ shows that JR is satisfied as well, since each voter's most-preferred candidate $p_i$ is included in $W^\star$. 
    
    To show uniqueness of $W^\star$ see that since $n=k$ every voter $i$ is cohesive with respect to $p_i$. Consequently, every (P)JR committee $W$ must satisfy $u_i(W)\geq 1$ for every $i \in N$. Therefore $p_1, p_k \in W$, because the only other candidate valued positively by voter $1$, respectively voter $k$, gives utility $1-\delta$. Let $R=\{i \in \{2,\dots,k-1\} : p_i \notin W \}$. If \(i\in R\), then both \(q_{i-1}\) and \(q_i\) must belong to \(W\), since either one alone gives voter \(i\) utility strictly less than \(1\). Suppose that $R\neq \emptyset$ and let $b \geq 1$  be the number of maximal sets of consecutive indices in $R$. Then for each such set of $\ell$ consecutive  peak candidates not included in $W$, all $\ell+1$ adjacent $q$-candidates must be included in $W$. Therefore, not selecting the $|R|$ peak candidates in $W$ requires at least $|R|+b$ $q$-candidates to be included. Then $|W|\geq (k-|R|)+(|R|+b)=k+b>k$ contradicting the feasibility of $W$. Therefore $R=\emptyset$, and thus $W^\star$ is the unique (P)JR committee. Consequently, the only lottery satisfying ex-post JR assigns
    probability $1$ to $W^\star$. Let $x$ denote the fractional committee induced by this lottery.
    
    Consider $S=\{2,\dots,k-1\}$ and define a partial fractional committee $y$ by setting $y_{q_j}=(k-2)/(k-1)$ for every $j\in[k-1]$ and $y_c=0$ for all remaining candidates. Thus, $y$ is a scaled-down version of the set of candidates indicated in orange. Its size is $y(C)=k-2$, which is exactly the proportional budget of $S$. Moreover, every voter $i\in S$ obtains $u_i(y)=2(1-\delta)(k-2)/(k-1)>\alpha=\alpha u_i(x)$. Hence, $(S,y)$ witnesses that $x$ is fractionally $(\alpha,1)$-blocked.
\end{proof}

\claimThree*

    \begin{proof}[Proof of Claim]
    For every $H_t \in \mathcal{H}$, there exists at least one candidate $d_t \in H_t \cap A$ such that $d_t \notin H$ for any $H \in \mathcal{H} \setminus \{H_t\}$. Otherwise, $\mathcal{H}$ would not be minimally covering $A$. For every $H_t \in \mathcal{H}$, we call such a candidate $d_t$ \emph{private}. Since $d_t \in A$, we have that $x_{d_t}>0$. 

    Now, assume for contradiction there exists $c \in C$ with $\mu_c \geq 3$ and $x_c <1$. Let $T=\{ t : c \in H_t, H_t \in \mathcal{H}\}$, where $|T| \geq 3$. Consider the candidate set $D=\{c\} \cup \{d_t: t \in T\}$. We claim that the profile restricted to $D$ is dominance-free. First, $c$ cannot dominate any private candidate $d_t: t \in T$, since that would contradict the dominance-respecting property of \Bestrep since $x_{d_t}>0$ but $x_c <1$. Second, no private candidate $d_t:t \in T$ can dominate $c$ because if $d_t$ dominates $c$ then $u_s(d_t) \geq u_s(c)$ and therefore $d_t \in H_s$ for some $s \in T, s\neq t$, contradicting the fact that $d_t$ is a private candidate. Lastly, with the same reasoning we argue that no private candidate $d_t$ dominates another private candidate $d_s$ for any $s \in T, s\neq t$.

    Thus, by antichain single-peakedness, the utility profile $U$ restricted to $D$ is single-peaked. Then, all sets $H_t \cap D$ are upper-contour sets in a single-peaked election and therefore intervals in $\triangleright$ restricted to $D$. Furthermore, due to our choice, $H_t$ contains $d_t$ and $c$ but none of the other private candidates. Since $|T| \geq 3$ there exist at least two distinct private candidates $d_t, d_s \in D$ such that $d_t \triangleright d_s \triangleright c$ or  $c \triangleright d_s \triangleright d_t$. Then $H_t \cap D$ must also contain $d_s$ contradicting the fact that $d_s \notin H_t$ for any $s \neq t$. 
    \end{proof}

\begin{example} \label{ex:bestrep_no_ejr}
    Consider the election $E=(N,C,U,k=2)$ described by the utility profile below, which is single-peaked with respect to $p_1 \triangleright p_2 \triangleright t_1 \triangleright t_2 \triangleright p_3 \triangleright p_4$:
\[
\begin{array}{c|cccccc}
      & p_1 & p_2 & t_1 & t_2 & p_3 & p_4 \\
\hline
1 & 2-\varepsilon & 1 & 1 & 1 & 0 & 0\\
2 & 0 & 2-\varepsilon & 1 & 1 & 0 & 0 \\
3 & 0 & 0 & 1 & 1 & 2-\varepsilon & 0 \\
4 & 0 & 0 & 1  & 1 & 1 & 2-\varepsilon \\
\end{array}
\]
    Fix any $\varepsilon \in (0,1)$. Each voter has budget $\frac{k}{n}=\frac{1}{2}$, and each voter $i$ has a unique most-preferred candidate $p_i$, shared with no other voter, so \Bestrep assigns $i$'s entire budget to $p_i$. The fractional committee returned by \Bestrep is therefore $x=(\frac{1}{2},\frac{1}{2}, 0 ,0, \frac{1}{2}, \frac{1}{2}).$

    Since $x_{t_1}=x_{t_2}=0$, \sysrounding can only round the four $p$-candidates, and its support consists of the two committees $W^1=\{p_1,p_3\}$ and $W^2=\{p_2,p_4\}$.

    Consider the coalition $S=\{1,2,3,4\}$ and the subset of candidates $T=\{t_1,t_2\}$. Since $\frac{|S|}{n}=1=\frac{|T|}{k}$ and every voter in $S$ has utility $1$ on both $t_1$ and $t_2$, $S$ is $(\tau,\{t_1,t_2\})$-cohesive for $\tau(t_1)=\tau(t_2)=1$. EJR then requires some $i \in S$ with $u_i(W^j)\geq 2$ for each committee $W^j$ in the support. But in $W^1$, the maximum utility of a voter $i \in S$ is equal to $2-\varepsilon$ attained by voters $1$ and $3$. By symmetry, $\max_{i\in S} u_i(W^2)=2-\varepsilon<2$ attained by voters $2$ and $4$. Hence neither committee in the support satisfies EJR.

    In fact we show something slightly stronger. The relative order of $p_1,p_2,p_3,p_4$ is unique up to reversal (and swap of the identical $t_1,t_2$ candidates), and every linear order $\triangleright$ witnessing the single-peakedness of $E$ induces the same fractional committee $x$ under \Bestrep and the same committees in the support of \sysroundinginput. Hence \Bestrep + \sysrounding fails to guarantee $EJR$ for every linear order $\triangleright$ consistent with single-peakedness, not just a particular one.
\end{example}

\end{document}